\documentclass[preprint,11pt]{article}
\usepackage{amsmath,amssymb,amsthm}
\usepackage{color}
\newcommand{\F}{\mathbb{F}}
\newtheorem{definition}{Definition}
\newtheorem{lemma}{Lemma}
\newtheorem{theorem}{Theorem}

\newtheorem{corollary}{Corollary}
\newtheorem{proposition}{Proposition}

\begin{document}

\title{
Some Results on Bent-Negabent Boolean Functions over Finite Fields\thanks{This is a sufficiently revised and extended version of the paper \cite{Sseta12}. Section \ref{new} is a completely new contribution.}
}
\author{Sumanta Sarkar\\
Centre of Excellence in Cryptology\\
Indian Statistical Institute, Kolkata, INDIA\\
Sumanta.Sarkar@gmail.com
}

\date{}
\maketitle

\begin{abstract}
We consider negabent Boolean functions that have Trace representation.
We completely characterize quadratic negabent monomial functions.
We show the relation between negabent functions and bent functions
via a quadratic function.
Using this characterization, we give infinite classes of bent-negabent
Boolean functions over the finite field $\F_{2^n}$, with the maximum possible degree, $n \over 2$.
These are the first ever constructions of negabent functions with trace representation that have 
optimal degree.

\end{abstract}

{\bf Keywords:} Negabent function,
bent function,
quadratic Boolean function,
Maiorana-McFarland function, 
permutation, 
complete mapping polynomial.

\section{Introduction}
Hadamard-Walsh transform is an important tool in characterizing Boolean functions.
For example, many cryptographic properties can be analyzed by the Hadamard-Walsh transform.
A function on even number of variables that has the maximum possible distance from the affine 
functions is called a bent function. These functions have equal absolute spectral values under the
Hadamard-Walsh transform and was first introduced by Rothaus \cite{RO76}.
It is natural to investigate the spectral values of Boolean functions under some other
Fourier transform. 
In 2007, Parker and Pott \cite{PP07}, considered the nega-Hadamard transform.
and introduced negabent functions. 
These functions have equal absolute spectral values under the nega-Hadamard transform.
The periodic autocorrelation values of a bent function are all zero. 
The negaperiodic autocorrelation value of a Boolean function under the nega-Hadamard transform 
is the analogue of the periodic autocorrelation value. 
The negaperiodic autocorrelation values are all zero for a negabent function.
These properties of a negabent function motivate us to study it further.
Negabent functions which are also bent are interesting as they 
have extreme properties in terms of two different Fourier transforms.

Results on negabent functions can be found in \cite{PP07,SPP08,P00,RP06,Sugatanega,Sindocrypt09}. 
As an example, the $6$-variable function
$
x_4(x_1 x_2 \oplus x_2 x_3 \oplus x_1 \oplus x_2) \oplus
x_5(x_1 x_2 \oplus x_2 x_3 + x_3) + x_6(x_1 \oplus x_3)$
is a cubic negabent function.

In \cite{PP07,SPP08} some classes of Boolean functions which are both
bent and negabent (bent-negabent) have been identified. In \cite{SPP08},
construction of negabent functions has been shown in the class of
Maiorana-McFarland bent functions. 
It is interesting to note that all 
the affine functions (both odd and even variables) are negabent
\cite[Proposition 1]{PP07}. 
In \cite{Sindocrypt09}, symmetric negabent
functions have been characterized and shown to be all affine for both
odd and even number of variables.
The maximum degree of an $n$-variable bent-negabent function is 
$n \over 2$.
Very recently construction of bent-negabent functions have been given in
\cite{Pottnega} with the optimal degree.

In this paper, we characterize
the negabent functions which are defined over finite fields, 
{\em i.e.}, functions with Trace representation.

Let $\mathbb{F}_2^n$ be the vector space formed by the binary $n$-tuples
and $\mathbb{F}_{2^n}$ be the finite field with $2^n$ elements. For a set $E$,
the set of non zero elements of $E$ is denoted by $E^*$.

In \cite{PP07}, quadratic negabent Boolean functions defined over the vector space
$\mathbb F_2^n$  were characterized. 
Any quadratic Boolean function can be written as 
\begin{eqnarray*}
g(x_1, \ldots, x_n)& = & \sum_{1 \leq i < j \leq n} q_{i, j} x_i x_j
+ \sum_{1\leq i \leq n} l_i x_i + c\\
\ & = & x Q x^T + L x^T + c,
\end{eqnarray*}
where $Q=(q_{i,j})$ is an upper triangular binary matrix,
$L = (l_1, \ldots, l_n)$ is a binary vector and $c\in \{0, 1\}$.
Consider the binary symmetric matrix $B = Q + Q^T$ with the zero diagonal,
which is the symplectic matrix corresponding to the quadratic function $g$.
It is well known that a quadratic function $g$ is bent if and only if
the corresponding matrix $B$ has full rank. In \cite{PP07}, it was proved
that a quadratic function $g$ is negabent if and only if the matrix $B+I$
has full rank, where $B$ is the corresponding symplectic matrix and
$I$ is the identity matrix.

In this paper, first we consider quadratic monomials
defined over the field $\mathbb F_{2^n}$ and is of the form
\begin{equation}
\label{form-quad}
f: x \mapsto Tr_1^n(\lambda x^{2^k+1}), ~~~~\mbox{where } \lambda \in \mathbb F_{2^n}^*.
\end{equation}

The $\lambda$'s for which $f$ is bent is well known.
We characterize those $\lambda$'s for which $f$ is negabent.
We also give characterization of $\lambda$ for even $n$ such that $f$ is bent-negabent. 
The existence of quadratic bent-negabent functions is known \cite{PP07}. We reprove
the existence by simple counting argument and using the characterization of 
quadratic bent-negabent monomials.

We also study the negabent property of Maiorana-McFarland bent functions
$f: \mathbb F_{2^t}\times \mathbb F_{2^t} \mapsto \mathbb F_2$
defined by $f(x,y) = Tr_1^t(x \pi(y) + h(y))$, where $\pi(y)$ is a permutation
polynomial over $\mathbb{F}_{2^t}$ and $h(y)$ is any polynomial over $\F_{2^t}$.
We present a necessary and sufficient condition such that these functions are
negabent. As a consequence, we show that when the permutation $\pi$ is 
$x \mapsto x^{2^i}$, the bent function $f$ is negabent if and only if
$h(y)$ is bent.
From this we show how a Maiorana-McFarland bent-negabent function of degree $n \over 4$
over $\mathbb{F}_{2^n}$ can be obtained.

Then we show that given a bent function $f$ over $\F_{2^n}$,
it is possible to obtain a negabent function by adding a quadratic function, and
vice versa. Using this result we are able to present infinite classes of bent-negabent
functions having the optimal degree $n \over 2$.

\section{Preliminary}
An $n$-variable Boolean function is a mapping 
$f : \mathbb{F}_2^n \mapsto \mathbb{F}_2$.
The Hamming weight of a binary string $S$ is the number of $1$'s
in $S$ and it is denoted as $wt(S)$.
An $n$-variable Boolean function $f$ can be written as a function of
$x_1, \ldots, x_n$ variables as follows,
$$f(x_1,x_2,\ldots , x_n) =
\bigoplus_{a = (a_1, \ldots, a_n) \in \mathbb{F}_2^n} \mu_a
(\prod_{i=1}^n x_i^{a_i}), \mbox{ where } \mu_a \in \mathbb{F}_2.$$
This is called the algebraic normal form (ANF) of $f$.
The degree, $\deg(f)$, of $f$ is defined as
$\displaystyle\max _{a \in \mathbb{F}_2^n}\{wt(a) | \mu _a \ne 0\}$.

Let $\lambda = (\lambda_1, \ldots, \lambda_n)$ and
$x = (x_1, \ldots, x_n)$ be two vectors in $\mathbb{F}_2^n$
and $\lambda \cdot x = \lambda_1 x_1 \oplus \ldots \oplus \lambda_n x_n$.
Then the Hadamard-Walsh transform value of $f$ at
$\lambda$ is given by
\begin{equation}
\label{eqnhw}
\mathcal{H}_f(\lambda) = \frac{1}{2^\frac{n}{2}} \sum_{x \in \mathbb{F}_2^n} (-1)^{f(x) \oplus \lambda \cdot x}.
\end{equation}
The function is called bent if $|H_f(\lambda)| = 1$ for all $\lambda \in \mathbb{F}_2^n$.
After the introduction of bent functions in \cite{RO76}, there have been many 
constructions of bent functions, for instance, \cite{Dillon}, \cite{Carlet93}, \cite{LHTK13}, and references therein.

For $a \in \mathbb{F}_2^n$, the periodic autocorrelation value of $f$
is computed as 
$$\tau_a = \sum_{x \in \mathbb{F}_2^n}(-1)^{f(x) \oplus f(x \oplus a)}.$$
A Boolean function $f$ is bent if and only if $\tau_a = 0$ for all
$a \in {\mathbb{F}_2^n}^*$. 

The nega-Hadamard transform value of 
$f$ at $\lambda \in \mathbb{F}_2^n$
is given by
\begin{equation}
\label{eqnnh}
\mathcal{N}_f(\lambda) = 
\frac{1}{2^{\frac{n}{2}}} \sum_{x \in \mathbb{F}_2^n} (-1)^{f(x) \oplus \lambda \cdot x}I^{wt(x)},
\end{equation}
where $I = \sqrt{-1}$, is the imaginary unit of the complex number.
Note that $\mathcal{N}_f(\lambda)$ is complex valued.

The function $f : \mathbb F_{2}^n \rightarrow \mathbb F_2$ is called {\em negabent} \cite{PP07} if
the magnitude of $\mathcal{N}_f(\lambda)$ is $1$, {\em i.e.},
$|\mathcal{N}_f(\lambda)| = 1$ for all $\lambda \in \mathbb F_2^n$.
In the following theorem we state an alternate characterization
of negabent functions in terms of their negaperiodic autocorrelation values
which has been shown in \cite[Theorem 2]{PP07} and \cite[Lemma 3]{Sugatanega}.

\begin{theorem}
A Boolean function $f$ is negabent if and only if
\begin{equation}
\sum_{x \in \mathbb F_2^n}
(-1)^{f(x) \oplus f(x \oplus y)}(-1)^{x \cdot y} = 0
\end{equation}
for all $y \in {\mathbb F_2^n}^*$.
\end{theorem}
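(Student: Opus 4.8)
The plan is to treat the displayed sum as a negaperiodic autocorrelation function and to show that, up to a unit-modulus factor, it is the ordinary Hadamard transform of the squared nega-Hadamard spectrum $\lambda \mapsto |\mathcal{N}_f(\lambda)|^2$. Once that link is established, the negabent condition $|\mathcal{N}_f(\lambda)| = 1$ for all $\lambda$ becomes equivalent to the statement that this transform is supported only at $y = 0$, which is exactly the vanishing condition asserted in the theorem.

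Concretely, I would first write $|\mathcal{N}_f(\lambda)|^2 = \mathcal{N}_f(\lambda)\,\overline{\mathcal{N}_f(\lambda)}$ and expand using the definition \eqref{eqnnh}, noting that $\overline{I^{wt(b)}} = I^{-wt(b)}$ while the sign terms are real. Collecting the two summation variables into a double sum over $a,b \in \mathbb{F}_2^n$ produces the factor $(-1)^{f(a)\oplus f(b)}(-1)^{\lambda\cdot(a\oplus b)}I^{wt(a)-wt(b)}$. Multiplying by $(-1)^{\lambda\cdot y}$ and summing over $\lambda$, the character sum $\sum_\lambda (-1)^{\lambda\cdot(a\oplus b\oplus y)}$ collapses to $2^n$ exactly when $b = a\oplus y$ and vanishes otherwise, reducing the double sum to a single sum over $a$ carrying the factor $I^{wt(a)-wt(a\oplus y)}$.

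The key step, and the one I expect to be the main obstacle, is converting $I^{wt(a)-wt(a\oplus y)}$ into the sign $(-1)^{a\cdot y}$ appearing in the theorem. For this I would use $wt(a\oplus y) = wt(a) + wt(y) - 2N(a,y)$, where $N(a,y)$ counts the coordinates in which $a$ and $y$ are both $1$; hence $wt(a) - wt(a\oplus y) = 2N(a,y) - wt(y)$, and since $I^{2N(a,y)} = (-1)^{N(a,y)} = (-1)^{a\cdot y}$ (the integer and $\mathbb{F}_2$ inner products agree modulo $2$), one obtains $I^{wt(a)-wt(a\oplus y)} = I^{-wt(y)}(-1)^{a\cdot y}$. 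The factor $I^{-wt(y)}$ is independent of $a$ and can be pulled outside the sum, yielding the clean relation
\[
\sum_{\lambda \in \mathbb{F}_2^n} |\mathcal{N}_f(\lambda)|^2 (-1)^{\lambda\cdot y} = I^{-wt(y)} \sum_{x \in \mathbb{F}_2^n} (-1)^{f(x)\oplus f(x\oplus y)}(-1)^{x\cdot y}.
\]

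Since $I^{-wt(y)}$ is a nonzero unit, the autocorrelation sum on the right vanishes if and only if the Hadamard transform on the left does, so I would finish both directions together via Fourier inversion. If $f$ is negabent, then $|\mathcal{N}_f(\lambda)|^2 \equiv 1$, and for $y \neq 0$ the left side is $\sum_\lambda (-1)^{\lambda\cdot y} = 0$, giving the stated vanishing. Conversely, if the right side vanishes for all $y \neq 0$, then the Hadamard transform of $\lambda\mapsto |\mathcal{N}_f(\lambda)|^2$ is concentrated at $y = 0$, where a direct count gives the value $2^n$; inverting the transform forces $|\mathcal{N}_f(\lambda)|^2 \equiv 1$, hence $|\mathcal{N}_f(\lambda)| = 1$ for all $\lambda$, so $f$ is negabent.
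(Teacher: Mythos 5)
Your argument is correct and complete: the identity
\begin{equation*}
\sum_{\lambda \in \mathbb{F}_2^n} |\mathcal{N}_f(\lambda)|^2 (-1)^{\lambda\cdot y} \;=\; I^{-wt(y)} \sum_{x \in \mathbb{F}_2^n} (-1)^{f(x)\oplus f(x\oplus y)}(-1)^{x\cdot y}
\end{equation*}
is verified correctly (including the key step $wt(a)-wt(a\oplus y)=2N(a,y)-wt(y)$ and $(-1)^{N(a,y)}=(-1)^{a\cdot y}$), the $y=0$ value $2^n$ is the nega-Parseval identity, and the Fourier-inversion argument settles both directions. The paper itself gives no proof of this theorem --- it is quoted from Parker--Pott and St\u{a}nic\u{a} et al. --- and your derivation is essentially the standard one found in those sources, so there is nothing to object to.
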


From this theorem, we see that the correlation values between the function
$f(x)$ and $f(x \oplus y) \oplus y\cdot x$ are all zero 
for all $y \in {\mathbb F_2^{n}}^*$. Moreover,
for even number of variables, if a negabent function is also a bent function,
then the correlation values of the function $f(x)$ and $f(x \oplus y)$ are also
equal to zero for all $y \in {\mathbb F_2^{n}}^*$. 
Therefore, the functions which are both bent and negabent
are interesting to study. We call these functions \emph{bent-negabent}.
Bent functions can exist only on even number of variables and it has degree more than
$1$. However, all the affine functions are negabent \cite{PP07} which tells
that negabent functions exist for both even and odd number of variables.

\section{Characterization of negabent functions over the finite field $\mathbb F_{2^n}$}
Now we consider Boolean functions defined over the field
$\mathbb F_{2^n}$ and we characterize the negabent property of those functions.
The vector space $\mathbb F_2^n$ can be easily identified with the field
$\mathbb F_{2^n}$ by choosing a basis of $\mathbb F_{2^n}$ over $\mathbb F_2$.
The function
$Tr_1^n : \mathbb F_{2^n} \mapsto \mathbb F_2$ is defined
as $$Tr_1^n(x) = x + x^2 + \ldots + x^{2^{n-1}}.$$
We denote $Tr_1^n$ simply by $Tr$ and $``+"$ is the finite field addition.
If we choose the basis
$\{\alpha_1, \ldots, \alpha_n\}$ to be self dual then it can be shown that
$Tr(xy) = \sum_{i = 1}^n x_i y_i$.

Henceforth, in this paper we choose the basis to be self dual.
It is also notable that a linear function over $\mathbb F_{2^n}$
is given by $\ell(x) = Tr(ax)$, $a \ne 0$.

Given a polynomial $F(x)$ over $\F_{2^n}$, we can get a Boolean function
$f : \F_{2^n} \rightarrow \F_2$ defined as $f(x) = Tr(F(x))$.
The highest binary weight of the exponents of $F(x)$ is denoted as the algebraic degree of $F(x)$,
then the degree of $f(x) = Tr(F(x))$ is equal to the algebraic degree of $F(x)$.

With the above discussions it is now clear that we can characterize negabent
functions defined over the finite field as follows.
\begin{proposition}
The function $f : \mathbb F_{2^n} \rightarrow \mathbb F_2$ is negabent if and only if
\begin{equation}
\label{eq-main-nega}
\sum_{x \in \mathbb F_{2^n}} (-1)^{f(x) + f(x+a) + Tr(ax)} = 0
\end{equation}
for all $a \in \mathbb F_{2^n}^*$.
\end{proposition}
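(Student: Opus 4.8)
The plan is to obtain this as a direct transcription of the negaperiodic-autocorrelation characterization already established in the earlier theorem, transporting that statement from the vector space $\mathbb{F}_2^n$ to the field $\mathbb{F}_{2^n}$ through the fixed identification between them. Since an abstract biconditional of the form ``$f$ is negabent iff (some sum vanishes)'' is already in hand, the entire task reduces to rewriting the sum appearing there in the field language.

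First I would recall the theorem: $f$ is negabent if and only if $\sum_{x \in \mathbb{F}_2^n} (-1)^{f(x) \oplus f(x \oplus y)}(-1)^{x \cdot y} = 0$ for every $y \in {\mathbb{F}_2^n}^*$. The crucial structural input is then the identification of $\mathbb{F}_2^n$ with $\mathbb{F}_{2^n}$ via the chosen self-dual basis. Under this identification two things happen simultaneously: componentwise XOR $\oplus$ coincides with field addition $+$ (both are the additive group operation of a characteristic-two structure), and, precisely because the basis is self-dual, the standard inner product satisfies $x \cdot y = \sum_{i=1}^n x_i y_i = Tr(xy)$, as recorded in the preliminary section.

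With these two dictionary entries the rewriting is mechanical. Renaming the nonzero element $y$ as $a$ and using $Tr(ax) = Tr(xa)$, each summand $(-1)^{f(x) \oplus f(x \oplus y)}(-1)^{x \cdot y}$ becomes $(-1)^{f(x) + f(x+a) + Tr(ax)}$, and as $x$ ranges over $\mathbb{F}_2^n$ it equivalently ranges over $\mathbb{F}_{2^n}$. Hence the vanishing condition of the theorem becomes exactly \eqref{eq-main-nega}, required for all $a \in \mathbb{F}_{2^n}^*$, and the ``if and only if'' is inherited verbatim from the theorem.

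The only point needing genuine justification is the identity $x \cdot y = Tr(xy)$, which is exactly where the self-dual basis assumption is consumed; everything else is a change of notation between the additive group of the vector space and that of the field. Consequently I do not anticipate any real obstacle, and the proof is essentially a one-line appeal to the earlier theorem together with the self-dual basis convention.
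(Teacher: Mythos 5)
Your proposal is correct and matches the paper's approach exactly: the paper also derives this proposition as an immediate translation of the negaperiodic-autocorrelation theorem, using the self-dual basis identification $x\cdot y = Tr(xy)$ established in the preceding discussion (indeed the paper treats it as "clear" from that discussion and gives no further proof). Nothing is missing.
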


It is known that a bent function $f : \mathbb{F}_{2^n} \mapsto \mathbb{F}_{2}$
is bent if and only if

\begin{equation}
\label{eq-main-bent}
\sum_{x \in \mathbb F_{2^n}} (-1)^{f(x) + f(x+a)} = 0
\end{equation}
for all $a \in \mathbb F_{2^n}^*$.

A function for which both (\ref{eq-main-nega}) and (\ref{eq-main-bent})
hold is a bent-negabent function.

An $n$-variable Boolean function $\phi$ is called balanced if its weight is $2^{n-1}$.
Note that $\sum_{x \in \mathbb F_{2^n}} (-1)^{\phi(x)} = 0$ if and only if
$\phi$ is balanced.
Therefore, if $f$ is bent-negabent, then both
$f(x) + f(x+a)$ and $f(x) + f(x+a) + Tr(ax)$ are balanced
for all $a \in \mathbb F_{2^n}^*$.

\subsection{Linear structure of negabent functions}
As the sum (\ref{eq-main-nega}) involves derivatives of $f$, {\em i.e.}, $f(x) + f(x+a)$,
here we briefly discuss about the linear structures of negabent functions.
\begin{definition}
An $a \in \mathbb{F}_{2^n}^*$ is said to be a linear structure of a polynomial
$F(x)$ over $\mathbb{F}_{2^n}$ if the derivative $F(x) + F(x+a)$ is constant.
\end{definition}

From (\ref{eq-main-bent}), it is clear that a bent function can not have a linear structure.
However, a negabent function can have a linear structure. In fact, if $f$ is such that
any $a \in \mathbb{F}_{2^n}^*$ is a linear structure, then $f$ is negabent.
As in that case, the term $f(x)+ f(x+a)+ ax$ is $c+ax$, for some constant $c$, 
which is affine, {\em i.e.}, balanced.
This happens when $f(x)$ is an affine polynomial, 
which proves that if $f$ is affine it is negabent.

A polynomial $F(x)$ over $\F_{2^n}$ is called a complete mapping polynomial
if both $F(x)$ and $F(x) + x$ are permutation polynomials.
We use such permutation polynomials in our constructions.

\subsection{Quadratic negabent monomial functions}
We consider quadratic monomials and characterize when they are negabent.
\begin{proposition}
\label{quadprop}
Let $f: \mathbb F_{2^n} \rightarrow \mathbb F_2$ be a quadratic function of the form
$$f(x) = Tr(\lambda x^{2^k+1}).$$ 
Then $f$ is negabent if and only if
\begin{equation}
\label{ne-qu-co1}
\lambda^{2^{n-k}} a^{2^{n-k}} + \lambda a^{2^k} + a \ne 0
\end{equation}
for all $a \in \mathbb F_{2^n}^*$.
\end{proposition}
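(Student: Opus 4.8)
The plan is to invoke the characterization of negabent functions over $\mathbb{F}_{2^n}$ stated earlier in the excerpt: $f$ is negabent if and only if
\[
\sum_{x \in \mathbb{F}_{2^n}} (-1)^{f(x) + f(x+a) + Tr(ax)} = 0
\]
for every $a \in \mathbb{F}_{2^n}^*$. Thus the whole problem reduces to analyzing, for each fixed $a$, the exponent $f(x) + f(x+a) + Tr(ax)$ as a function of $x$ and deciding exactly when the associated sum vanishes.

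First I would compute the derivative $f(x) + f(x+a)$. Expanding $(x+a)^{2^k+1} = (x^{2^k}+a^{2^k})(x+a)$, the two $x^{2^k+1}$ contributions cancel and one is left with $Tr\bigl(\lambda(a\, x^{2^k} + a^{2^k} x + a^{2^k+1})\bigr)$. The key simplification is to fold the term $Tr(\lambda a\, x^{2^k})$ back into a genuine linear form in $x$ using the Frobenius-invariance of the trace, $Tr(y) = Tr(y^{2^j})$: raising $\lambda a\, x^{2^k}$ to the $2^{n-k}$-th power and using $x^{2^n}=x$ gives $Tr(\lambda^{2^{n-k}} a^{2^{n-k}} x)$. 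Collecting the two linear contributions and adding $Tr(ax)$ yields
\[
f(x) + f(x+a) + Tr(ax) = Tr\bigl((\lambda^{2^{n-k}} a^{2^{n-k}} + \lambda a^{2^k} + a)\,x\bigr) + Tr(\lambda a^{2^k+1}),
\]
an affine function of $x$ whose linear part is governed precisely by the quantity $L(a) := \lambda^{2^{n-k}} a^{2^{n-k}} + \lambda a^{2^k} + a$ appearing in (\ref{ne-qu-co1}).

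Next I would evaluate the character sum. Since $Tr(\lambda a^{2^k+1})$ depends only on $a$, it factors out as a global sign, and the remaining sum is $\sum_{x} (-1)^{Tr(L(a) x)}$. By orthogonality of the additive characters of $\mathbb{F}_{2^n}$, this inner sum equals $2^n$ when $L(a) = 0$ and equals $0$ when $L(a) \ne 0$. Hence, for a fixed $a \ne 0$, the negabent sum vanishes if and only if $L(a) \ne 0$. Requiring this for all $a \in \mathbb{F}_{2^n}^*$ gives exactly condition (\ref{ne-qu-co1}), completing the argument.

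I do not expect any genuine obstacle; the proof is a direct computation once the correct characterization is invoked. The one point needing care is the bookkeeping in the folding step: one must apply the Frobenius power $2^{n-k}$ (not $2^k$) to bring $Tr(\lambda a\, x^{2^k})$ into linear form, and this is precisely the source of the $2^{n-k}$ exponent in the statement. Pinning down this exponent is what fixes the exact shape of $L(a)$, so it is the only place where a sign- or exponent-level slip could derail the otherwise routine calculation.
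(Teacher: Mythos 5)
Your proof is correct and follows essentially the same route as the paper: invoke the negabent characterization $\sum_x (-1)^{f(x)+f(x+a)+Tr(ax)}=0$, expand the derivative of $Tr(\lambda x^{2^k+1})$, fold $Tr(\lambda a x^{2^k})$ into linear form via the Frobenius power $2^{n-k}$, and conclude by orthogonality of additive characters. The paper states the simplification more tersely (and silently drops the constant $Tr(\lambda a^{2^k+1})$, which you correctly observe only contributes a global sign), but the argument is the same.
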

\begin{proof}
From (\ref{eq-main-nega}) we know that the function $f$ is negabent 
if and only if for all $a \in \mathbb F_{2^n}^*$,
\begin{eqnarray}
\label{ne-qu-co}
\nonumber
\sum_{x \in \mathbb{F}_{2^n}}
(-1)^{Tr(\lambda x^{2^k+1}) + Tr(\lambda(x + a)^{2^k+1})
		+ Tr(ax)} &=& 0, \\
\mbox{ {\em i.e.}, } \sum_{x \in \mathbb{F}_{2^n}}
(-1)^{Tr\left((\lambda^{2^{n-k}} a^{2^{n-k}} + \lambda a^{2^k}
		+ a)x \right)} &=& 0.
\end{eqnarray}

Note that (\ref{ne-qu-co}) is true if and only if 
$$\lambda^{2^{n-k}} a^{2^{n-k}} + \lambda a^{2^k} + a \ne 0,$$
for all $a \in \mathbb F_{2^n}^*$. Hence the result.

\end{proof}

\begin{proposition}
\label{prop-nega-perm}
The quadratic function $f: \mathbb F_{2^n} \rightarrow \mathbb F_2$
of the form $f(x) = Tr(\lambda x^{2^k+1})$ 
is negabent if and only if
$$P(x) = \lambda^{2^{n-k}} x^{2^{n-k}} + \lambda x^{2^k} + x$$
is a permutation polynomial over $\mathbb F_{2^n}$.
\end{proposition}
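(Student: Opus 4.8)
The plan is to build directly on Proposition \ref{quadprop}, which already reduces the negabentness of $f$ to the arithmetic condition (\ref{ne-qu-co1}), namely that $\lambda^{2^{n-k}} a^{2^{n-k}} + \lambda a^{2^k} + a \neq 0$ for every $a \in \mathbb{F}_{2^n}^*$. The entire task is then to recognize that this non-vanishing condition is precisely the statement that the polynomial $P(x) = \lambda^{2^{n-k}} x^{2^{n-k}} + \lambda x^{2^k} + x$ permutes $\mathbb{F}_{2^n}$. The one structural observation that drives everything is that $P$ is a \emph{linearized} (additive) polynomial: each of its three monomials has an exponent that is a power of two, namely $2^{n-k}$, $2^k$, and $2^0 = 1$. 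Since every Frobenius map $x \mapsto x^{2^j}$ is $\mathbb{F}_2$-linear and multiplication by a fixed field element is $\mathbb{F}_2$-linear, $P$ is an $\mathbb{F}_2$-linear endomorphism of the $n$-dimensional vector space $\mathbb{F}_{2^n}$.

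Next I would invoke the standard dichotomy for linear maps on a finite-dimensional space: an $\mathbb{F}_2$-linear endomorphism of $\mathbb{F}_{2^n}$ is bijective if and only if it is injective, which in turn holds if and only if its kernel is trivial. Consequently $P$ is a permutation polynomial of $\mathbb{F}_{2^n}$ exactly when the only root of $P(x) = 0$ is $x = 0$, that is, when $P(a) \neq 0$ for all $a \in \mathbb{F}_{2^n}^*$.

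It then remains only to splice these facts together. The non-vanishing condition $P(a) \neq 0$ for all $a \in \mathbb{F}_{2^n}^*$ characterizing the permutation property is literally condition (\ref{ne-qu-co1}) of Proposition \ref{quadprop}, so $f$ is negabent if and only if $P$ is a permutation polynomial, which is the claim. I expect no serious obstacle here: the mathematical content lies entirely in the single observation that $P$ is linearized, after which finite-dimensional linear algebra closes the argument. The only point warranting a moment's care is verifying that each exponent occurring in $P$ is genuinely a power of two so that the linearity claim is legitimate, and this is immediate for $2^{n-k}$, $2^k$, and $1$.
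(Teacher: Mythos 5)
Your proposal is correct and follows exactly the paper's own argument: invoke Proposition \ref{quadprop} to reduce negabentness to the non-vanishing of $P(a)$ for $a \in \mathbb{F}_{2^n}^*$, then use the fact that a linearized polynomial is a permutation if and only if it has no nonzero root. The only difference is that you spell out the justification of that last fact (injectivity of an $\mathbb{F}_2$-linear endomorphism of a finite-dimensional space implies bijectivity), which the paper simply cites as standard.
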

\begin{proof}
From Proposition \ref{quadprop}, we know that $f$ is negabent if and only if
$$\lambda^{2^{n-k}} a^{2^{n-k}} + \lambda a^{2^k} + a \ne 0$$
for all $a \in \mathbb F_{2^n}^*$. That means for such $\lambda \in \mathbb F_{2^n}^*$, 
$P(x) = \lambda^{2^{n-k}} x^{2^{n-k}} + \lambda x^{2^k} + x$
has no non zero root in $\mathbb F_{2^n}$. Note that $P(x)$ is a linearized
polynomial and linearized polynomial is permutation if and only if
it has no non zero root. Hence the result.

\end{proof}

The polynomial $P(x)$ is a permutation over $\mathbb F_{2^n}$ if and only if
the polynomial $P(x)^{2^k}$ is a permutation over $\mathbb F_{2^n}$. Therefore, 
$Tr(\lambda x^{2^k+1})$ is a negabent if and only if 
$\lambda x + \lambda^{2^k} x^{2^{2k}} + x^{2^k}$
is a permutation. There are some results on the number of solutions
of the polynomial 
$\lambda^{2^k} x^{2^{2k}} + x^{2^k} + \lambda x$
in \cite{HK10}.

Let us point out some results related to the equation
$$\lambda^{2^k} x^{2^{2k}} + x^{2^k} + \lambda x = 0$$
which has been extensively studied in \cite{HK10}.
Let $\gcd(k, n) = d \geq 1$ and $n = td$ for $t > 1$. 
A particular sequence of polynomials over $\mathbb F_{2^n}$ 
is introduced as follows.
\begin{eqnarray}
\nonumber
C_1(x) &=& 1,\\
\nonumber
C_2(x) &=& 1,\\
C_{i+2}(x) &=& C_{i+1}(x) + x^{2^{ik}}C_i(x)
\mbox{~~for~~} 1 \leq i \leq t-1.
\end{eqnarray}

Another polynomial $Z_n(x)$ over $\mathbb F_{2^n}$ is defined as follows.
\begin{eqnarray}
\nonumber
Z_1(x) &=& 1,\\
Z_t(x) &=& C_{n+1}(x) + x C_{t-1}^{2^k}(x)
\mbox{~~for~~} t > 1.
\end{eqnarray}

Then we have the following result from \cite[Proposition 2]{HK10}.
\begin{proposition}
\label{z-root}
Let $\gcd(k, n) = d \geq 1$ and $n = td$ for $t > 1$. 
The equation
$$\lambda^{2^k} x^{2^{2k}} + x^{2^k} + \lambda x = 0$$
defined over $\mathbb F_{2^n}$ has no non zero solution in 
$\mathbb F_{2^n}$ if and only if
$Z_t(\lambda) \ne 0$.
\end{proposition}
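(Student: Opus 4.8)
The statement is quoted from \cite{HK10}, so I would regard it as recalling their argument; here is the route I would take. Write $q=2^{k}$ and $L(u)=\lambda^{q}u^{q^{2}}+u^{q}+\lambda u$, so the asserted equation is $L(u)=0$. Since $d\mid k$, the additive map $L$ is in fact $\mathbb{F}_{2^{d}}$-linear, so its set of roots in $\mathbb{F}_{2^{n}}$ is an $\mathbb{F}_{2^{d}}$-subspace; ``no nonzero solution'' is therefore exactly the assertion that $L$ is injective, equivalently a permutation of $\mathbb{F}_{2^{n}}$. The plan is to detect injectivity through the Frobenius orbit of a hypothetical root. For a root $u$ put $a_{i}=u^{q^{i}}$; because $n=td$ and $\gcd(k,n)=d$ we have $u^{q^{t}}=u$, so the orbit is $t$-periodic, $a_{i+t}=a_{i}$. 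Applying $(\cdot)^{q^{i}}$ to $L(u)=0$ gives the three-term relation $\lambda^{q^{i+1}}a_{i+2}+a_{i+1}+\lambda^{q^{i}}a_{i}=0$ for every $i$, indices read modulo $t$.

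First I would recast this cyclic recurrence through its transfer matrices, so that
\[
M_{i}=\begin{pmatrix}\lambda^{-q^{i+1}} & \lambda^{q^{i}-q^{i+1}}\\ 1 & 0\end{pmatrix},
\qquad
\begin{pmatrix}a_{i+2}\\ a_{i+1}\end{pmatrix}=M_{i}\begin{pmatrix}a_{i+1}\\ a_{i}\end{pmatrix}.
\]
The product $M_{t-1}\cdots M_{0}$ implements one full period, and $t$-periodicity forces $(M_{t-1}\cdots M_{0}-I)(a_{1},a_{0})^{T}=0$. Expanding such a product of $2\times2$ matrices produces continuants, and after the diagonal rescaling that clears the powers $\lambda^{-q^{i+1}}$ from each $M_{i}$, these continuants are precisely the $C_{i}$ governed by $C_{i+2}=C_{i+1}+\lambda^{q^{i}}C_{i}$ with $C_{1}=C_{2}=1$. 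I would prove this identification by induction on $i$, checking that each $C_{i}$ is the relevant entry (or minor) of the partial product $M_{i-1}\cdots M_{0}$, the base cases matching the empty and length-one products. Carried to the closure step, the determinant of $M_{t-1}\cdots M_{0}-I$ collapses, up to an explicit nonzero power of $\lambda$, to the single scalar $Z_{t}(\lambda)=C_{t+1}(\lambda)+\lambda\,C_{t-1}(\lambda)^{q}$ (I read the ``$C_{n+1}$'' in the definition as $C_{t+1}$, the last index the recurrence produces).

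It then remains to match the two directions. Necessity is immediate: a nonzero root $u$ yields a nonzero vector $(a_{1},a_{0})$ in the kernel of $M_{t-1}\cdots M_{0}-I$, forcing $Z_{t}(\lambda)=0$. The harder direction, and the step I expect to be the main obstacle, is sufficiency, because $Z_{t}(\lambda)=0$ only produces a nonzero solution of the \emph{linear} closure system, whereas one must still recover a genuine $u\in\mathbb{F}_{2^{n}}$ whose Frobenius orbit realizes it, i.e.\ reinstate the semilinear compatibility $a_{i}=a_{0}^{q^{i}}$ that the purely linear-algebraic argument discards. I would resolve this by working in the twisted (Ore) polynomial ring $\mathbb{F}_{2^{n}}\langle\phi\rangle$ with $\phi\mu=\mu^{q}\phi$, where $L=\lambda^{q}\phi^{2}+\phi+\lambda$ and $\phi^{t}=1$ on $\mathbb{F}_{2^{n}}$: there the number of roots of $L$ in $\mathbb{F}_{2^{n}}$ is read off from the reduction of $L$ modulo $\phi^{t}-1$, and $Z_{t}(\lambda)$ should be exactly the scalar obstruction to that reduction being invertible. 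Verifying that this skew-polynomial obstruction coincides with the continuant $Z_{t}$, thereby promoting the necessary condition to a sufficient one, is the crux, together with the routine but careful base-case and index bookkeeping.
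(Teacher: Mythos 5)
First, a point of context: the paper does not prove this proposition at all --- it is imported verbatim as \cite[Proposition 2]{HK10}, so the ``official'' proof here is a citation. Judged as a reconstruction of the external argument, your sketch starts on the right track (the $\mathbb{F}_{2^d}$-linearity of $L$, the reduction of ``no nonzero root'' to injectivity, the $t$-periodic Frobenius orbit $a_i=u^{q^i}$ with $a_{i+t}=a_i$ because $kt\equiv 0 \pmod n$, and the three-term recurrence whose continuants are the $C_i$); this is indeed the mechanism behind the sequences $C_i$ and $Z_t$ in \cite{HK10}. But as written it is not a proof, for two reasons. The smaller one: the identity $\det(M_{t-1}\cdots M_0 - I)= c\,Z_t(\lambda)$ with $c\ne 0$ --- equivalently, since each $\det M_i=\lambda^{q^i-q^{i+1}}$ telescopes to $\lambda^{1-q^t}=1$ and hence $\det(A-I)=\mathrm{tr}(A)$ in characteristic $2$, the identity $\mathrm{tr}(M_{t-1}\cdots M_0)= c\,Z_t(\lambda)$ --- is asserted via ``expanding such a product produces continuants'' but never verified. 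The promised induction is where the specific shape $Z_t=C_{t+1}+\lambda C_{t-1}^{q}$ must actually emerge, so it cannot be waved at; without it even your ``easy'' direction (nonzero root $\Rightarrow Z_t(\lambda)=0$) is not closed.

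The serious gap is the one you flag yourself: the converse. A nonzero fixed vector of the monodromy matrix only gives a solution $(a_0,\dots,a_{t-1})$ of the linearized closure system, whereas the proposition requires an actual $u\in\F_{2^n}$ with $a_i=u^{q^i}$; you must descend from a linear eigenvector to a semilinear one. Your proposed repair via the Ore ring $\F_{2^n}\langle\phi\rangle$ --- reducing $L=\lambda^{q}\phi^2+\phi+\lambda$ modulo $\phi^t-1$ and identifying $Z_t(\lambda)$ as the obstruction to invertibility of that reduction --- is a viable route, but you explicitly leave its verification as ``the crux,'' so the implication $Z_t(\lambda)=0\Rightarrow L$ has a nonzero root in $\F_{2^n}$ is not established. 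Since that implication is half of the claimed equivalence, the proposal is incomplete. Given that the paper itself treats the statement as a black box, the honest options are either to cite \cite{HK10} and stop, or to carry the Ore-ring (or an equivalent Hilbert-90-style descent) computation through to the end rather than gesturing at it.
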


The form of $\lambda$ for which
$Z_t(\lambda) = 0$ is known, which is as follows.
\begin{lemma}\cite[Corollary 1]{HK10}
\label{root-zt}
Let $\gcd(k, n) = d \geq 1$ and $n = td$ for $t > 1$.
Then $\alpha$ is a zero of $Z_t(x)$ in $\mathbb F_{2^n}$ 
if and only if it is of the form
\begin{equation}
\label{zero-form}
\frac{v_0^{2^{2k}+1}}{(v_0+v_1)^{2^k+1}},
\end{equation}
where $v_0 \in \mathbb F_{2^n} \setminus \mathbb F_{2^d}$ and $v_1 = v_0^{2^k}$. 
The total number of distinct roots are
$$\begin{cases}
\frac{2^{n+d} - 2^d}{2^{2d} - 1} \mbox{~~~~ for even } t\\
\frac{2^{n+d} - 2^{2d}}{2^{2d} - 1} \mbox{~~~ for odd } t.
\end{cases}$$
\end{lemma}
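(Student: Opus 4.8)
The plan is to read this Lemma as a statement about the linearized trinomial of Proposition \ref{z-root}, and then to combine an explicit parametrization with a counting argument carried out over the subfield $\F_{2^d}$. First I would fix notation: write $\sigma$ for the map $x \mapsto x^{2^k}$ and $v_i = v_0^{2^{ik}}$, so that $v_1 = v_0^{2^k}$, and consider the additive map
\[
L_\alpha(x) = \alpha^{2^k} x^{2^{2k}} + x^{2^k} + \alpha x .
\]
Because $d = \gcd(k,n)$ divides $k$, each exponent $2^{2k},2^k,1$ is a power of $2^d$, so $L_\alpha$ is $\F_{2^d}$-linear and its set of roots in $\F_{2^n}$ is an $\F_{2^d}$-subspace; written as $\alpha^{2^k}\sigma^2+\sigma+\alpha$ it has degree $2$ in $\sigma$, whence $\dim_{\F_{2^d}}\ker L_\alpha \le 2$. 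By Proposition \ref{z-root}, $\alpha$ is a zero of $Z_t(x)$ exactly when $L_\alpha$ has a nonzero root, i.e. when $\dim_{\F_{2^d}}\ker L_\alpha \ge 1$. Thus the Lemma amounts to (i) identifying the $\alpha$ for which $L_\alpha$ has a nonzero root and (ii) counting them.

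For (i) I would prove both inclusions of the claimed form. For the ``if'' direction, given $v_0 \in \F_{2^n}\setminus\F_{2^d}$, I first note $v_0+v_1 \ne 0$ precisely because $\F_{2^d}$ is the fixed field of $\sigma$ in $\F_{2^n}$, so the quotient defining $\alpha$ is well defined and nonzero. Using $\alpha = v_0 v_2/\bigl((v_0+v_1)(v_1+v_2)\bigr)$ and rewriting $L_\alpha(x)=0$ as $(\alpha x^{2^k}+x)^{2^k}=\alpha x$, I would exhibit an explicit nonzero root $x$ as a rational expression in $v_0,v_1$ and verify it by a direct $\sigma$-shift substitution. For the ``only if'' direction I would invert this construction: starting from a nonzero root $x$ of $L_\alpha$, recover a parameter $v_0\in\F_{2^n}\setminus\F_{2^d}$ with $\Phi(v_0):=v_0^{2^{2k}+1}/(v_0+v_1)^{2^k+1}=\alpha$, thereby showing $\Phi$ maps onto the zero set of $Z_t$.

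For (ii) the clean idea is a double count of the incidence set of pairs $(\alpha,x)$ with $x\ne 0$ and $L_\alpha(x)=0$. Read as an equation in $\alpha$ for fixed $x$, namely $x^{2^{2k}}\alpha^{2^k}+x\alpha = x^{2^k}$, the relation is again $\F_{2^d}$-affine-linear, its homogeneous part having kernel $\{\gamma\,x^{-(2^k+1)}:\gamma\in\F_{2^d}\}$ of dimension exactly $1$; hence each nonzero $x$ is hit by either $0$ or $2^d$ values of $\alpha$. Read as an equation in $x$ for fixed $\alpha$, each zero $\alpha$ of $Z_t$ contributes $2^{\,d\cdot\dim_{\F_{2^d}}\ker L_\alpha}-1$ nonzero roots, the dimension being $1$ or $2$. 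The parity of $t$ enters exactly here: writing $k=dk'$ with $\gcd(k',t)=1$ one computes $\gcd(2k,n)=d$ for odd $t$ and $\gcd(2k,n)=2d$ for even $t$, so the field fixed by $\sigma^2$ is $\F_{2^d}$ for odd $t$ but the strictly larger $\F_{2^{2d}}$ for even $t$; only in the latter case is there room for $\dim_{\F_{2^d}}\ker L_\alpha=2$. Feeding the resulting fiber sizes of $\Phi$ into the double count, and using that the $\F_{2^d}^*$-scaling $v_0\mapsto\gamma v_0$ leaves $\alpha$ invariant, should yield the two stated totals.

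The main obstacle is part (ii), and specifically the exact determination of how $\dim_{\F_{2^d}}\ker L_\alpha$ is distributed over the zero set of $Z_t$ --- equivalently, the precise fiber sizes of $\Phi$ --- since these are not constant (already for $d=1$ some fibers have size $1$ and others size $2$). This is where the recursion $C_i(x)$, the polynomial $Z_t(x)$, and a careful case split on $t \bmod 2$ do the real work; the parametrization verification in (i) and the remaining linear algebra over $\F_{2^d}$ in (ii) are comparatively routine once the dimension count is in hand.
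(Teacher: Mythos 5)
You should first be aware that the paper contains no proof of this lemma: it is imported verbatim as \cite[Corollary~1]{HK10} and used as a black box, so there is no in-paper argument to compare against and you are in effect re-proving an external result. Your reduction to the linearized trinomial $L_\alpha(x)=\alpha^{2^k}x^{2^{2k}}+x^{2^k}+\alpha x$ via Proposition~\ref{z-root}, the observation that $\ker L_\alpha$ is an $\F_{2^d}$-subspace of dimension at most $2$, and the fiber computation in $\alpha$ for fixed $x$ (each nonzero $x$ is hit by $0$ or $2^d$ values of $\alpha$) are all correct and are a reasonable skeleton for the counting part.

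The genuine gap is in step (ii), and it is not merely "deferred routine work": the one concrete mechanism you propose for how the parity of $t$ enters is false. You claim that $\dim_{\F_{2^d}}\ker L_\alpha=2$ can occur only when the fixed field of $\sigma^2$ strictly contains $\F_{2^d}$, i.e.\ only for even $t$. Take $k=1$, $n=3$, so $d=1$ and $t=3$ is odd: for $\alpha=1$ one has $L_1(x)=x^4+x^2+x$, whose kernel in $\F_8$ is $\{0,\theta,\theta^2,\theta+\theta^2\}$ (with $\theta^3=\theta+1$), an $\F_2$-space of dimension $2$. Indeed the stated count for this case is $4$ zeros of $Z_3$, while the incidence set has size $6$ (three elements of trace zero, each hit by $2$ values of $\alpha$), which forces exactly one fiber of size $3$ and three of size $1$ --- so mixed kernel dimensions occur for odd $t$, and your double count, if closed under your stated dichotomy, would output $6$ instead of $4$. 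Conversely, for $k=1$, $n=2$ (even $t$) no $\alpha$ has a dimension-$2$ kernel, so it is not true either that even $t$ forces dimension $2$. The distribution of $\dim_{\F_{2^d}}\ker L_\alpha$ over the zero set of $Z_t$ --- equivalently the fiber sizes of your map $\Phi$ --- is exactly the hard content of \cite{HK10} (it is governed there by a further trace-type condition involving the $C_i$ polynomials, not by $\gcd(2k,n)$), and your plan neither supplies it nor has a correct candidate for it. Part (i), the parametrization and its inversion, is plausible and essentially routine as you describe; but without a correct determination of the kernel-dimension distribution the counting cannot be completed, so the proposal as written does not establish the lemma.
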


Therefore, we have the following theorem which characterizes the quadratic 
negabent monomials.
\begin{theorem}
The function $f : x \mapsto Tr(\lambda x^{2^k+1})$ is negabent if and 
only if $\lambda$ can not be written as 
$\frac{v_0^{2^{2k}+1}}{(v_0+v_1)^{2^k+1}}$ for
$v_0 \in \mathbb F_{2^n} \setminus \mathbb F_{2^d}$ and $v_1 = v_0^{2^k}$ 
where $\gcd(k, n) = d$ and $n = td$.
\end{theorem}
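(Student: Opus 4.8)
The plan is to assemble the theorem as a chain of equivalences from the results already established, the central device being the reduction of the permutation condition in Proposition \ref{prop-nega-perm} to the linearized equation studied in \cite{HK10}. First I would invoke Proposition \ref{prop-nega-perm} to replace the negabent property of $f$ by the condition that $P(x) = \lambda^{2^{n-k}} x^{2^{n-k}} + \lambda x^{2^k} + x$ is a permutation of $\mathbb F_{2^n}$.

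Next I would pass from $P(x)$ to $P(x)^{2^k}$. Since $y \mapsto y^{2^k}$ is the $k$-fold Frobenius, it is a bijection of $\mathbb F_{2^n}$, so $P$ is a permutation if and only if $P^{2^k}$ is. Using $\lambda^{2^n} = \lambda$ and $x^{2^n} = x$ one computes $P(x)^{2^k} = \lambda^{2^k} x^{2^{2k}} + x^{2^k} + \lambda x$, which is precisely the linearized polynomial treated in Proposition \ref{z-root}. Because this polynomial is linearized, hence additive, it permutes $\mathbb F_{2^n}$ exactly when its kernel is trivial, i.e. when the equation $\lambda^{2^k} x^{2^{2k}} + x^{2^k} + \lambda x = 0$ has no nonzero solution in $\mathbb F_{2^n}$.

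Now I would apply Proposition \ref{z-root}, with $\gcd(k,n) = d$ and $n = td$, to conclude that this equation has only the trivial solution if and only if $Z_t(\lambda) \ne 0$. Thus $f$ is negabent if and only if $Z_t(\lambda) \ne 0$, i.e. if and only if $\lambda$ is \emph{not} a root of $Z_t$. Finally, Lemma \ref{root-zt} describes the roots of $Z_t$ exactly as the elements $\frac{v_0^{2^{2k}+1}}{(v_0+v_1)^{2^k+1}}$ with $v_0 \in \mathbb F_{2^n}\setminus \mathbb F_{2^d}$ and $v_1 = v_0^{2^k}$; taking the contrapositive yields the stated characterization.

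I do not expect a genuine obstacle here, as the proof is essentially a bookkeeping assembly of Proposition \ref{prop-nega-perm}, Proposition \ref{z-root} and Lemma \ref{root-zt}. The only points requiring care are verifying the Frobenius identity $P(x)^{2^k} = \lambda^{2^k} x^{2^{2k}} + x^{2^k} + \lambda x$, which relies on $\lambda, x \in \mathbb F_{2^n}$ so that the $2^{n-k}$ exponents collapse after raising to the $2^k$ power, and ensuring the standing hypothesis $t > 1$ of the cited results is in force, so that Proposition \ref{z-root} and Lemma \ref{root-zt} may legitimately be invoked for the given $d$ and $t$.
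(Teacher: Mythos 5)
Your proposal is correct and follows essentially the same route as the paper: Proposition \ref{prop-nega-perm} reduces negabentness to $P(x)$ being a permutation, the Frobenius twist $P(x)^{2^k}=\lambda^{2^k}x^{2^{2k}}+x^{2^k}+\lambda x$ (stated in the paper just before the theorem) connects this to Proposition \ref{z-root}, and Lemma \ref{root-zt} supplies the explicit form of the roots of $Z_t$. Your added care about the standing hypothesis $t>1$ is a reasonable refinement but does not change the argument.
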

\begin{proof}
Proposition \ref{prop-nega-perm} and Proposition \ref{z-root} 
imply that $f$ is negabent if and only if $\lambda$ is not a zero of
$Z_t(x)$ where $\gcd(k,n) = d$ and $n = td$. 
From Lemma \ref{root-zt}, we know that $Z_t(\lambda) \ne 0$ if and only if 
$\lambda$ is not of the form 
$\frac{v_0^{2^{2k}+1}}{(v_0+v_1)^{2^k+1}}$
where $v_0 \in \mathbb F_{2^n} \setminus \mathbb F_{2^d}$ and $v_1 = v_0^{2^k}$. 

\end{proof}

\subsection{Quadratic bent-negabent functions}
We recall the well known result on the quadratic bent monomials.
This is directly taken from \cite{DL04}.
\begin{lemma}\cite{DL04}
\label{gold-bent}
Let $\lambda \in \mathbb F_{2^n}$ and $n$ even.
The function 
$f : \mathbb F_{2^n} \rightarrow \mathbb F_2$
with $$f(x) = Tr(\lambda x^{2^k+1})$$
is bent if and only if 
$$\lambda \notin \{x^{2^k+1} \mid x \in \mathbb F_{2^n}\}.$$
\end{lemma}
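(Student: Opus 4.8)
The plan is to use the autocorrelation characterization of bentness, equation (\ref{eq-main-bent}), exactly as was done for negabentness in Proposition \ref{quadprop}. First I would expand the derivative $f(x) + f(x+a)$ for $f(x) = Tr(\lambda x^{2^k+1})$. Since $(x+a)^{2^k+1} = x^{2^k+1} + a x^{2^k} + a^{2^k} x + a^{2^k+1}$, the pure $x^{2^k+1}$ terms cancel and, after pushing one Frobenius power through the trace (using $Tr(\lambda a x^{2^k}) = Tr(\lambda^{2^{n-k}} a^{2^{n-k}} x)$), I obtain
\begin{equation*}
f(x) + f(x+a) = Tr\bigl((\lambda^{2^{n-k}} a^{2^{n-k}} + \lambda a^{2^k}) x\bigr) + Tr(\lambda a^{2^k+1}),
\end{equation*}
which is an affine function of $x$. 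The inner sum in (\ref{eq-main-bent}) therefore vanishes precisely when the linear coefficient is nonzero, so $f$ is bent if and only if the linearized polynomial $L(a) = \lambda^{2^{n-k}} a^{2^{n-k}} + \lambda a^{2^k}$ has no nonzero root in $\mathbb{F}_{2^n}$.

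Next I would simplify the root condition. Raising $L(a) = 0$ to the $2^k$-th power (a bijection fixing $0$) yields the equivalent equation $\lambda^{2^k} a^{2^{2k}} = \lambda a$. Dividing by $\lambda a$ for a nonzero root and using $2^{2k} - 1 = (2^k-1)(2^k+1)$, this becomes $(\lambda a^{2^k+1})^{2^k - 1} = 1$. Hence $f$ is \emph{not} bent if and only if there is some $a \in \mathbb{F}_{2^n}^*$ for which $\lambda a^{2^k+1}$ is a $(2^k-1)$-th root of unity; equivalently, writing $H$ for the group of $(2^k+1)$-th powers and $\mu$ for the group of $(2^k-1)$-th roots of unity in $\mathbb{F}_{2^n}^*$, $f$ is not bent if and only if $\lambda \in \mu H$.

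The forward implication of the lemma is then immediate: if $\lambda = c^{2^k+1}$ with $c \ne 0$, taking $a = c^{-1}$ gives $\lambda a^{2^k+1} = 1 \in \mu$, so $L$ has a nonzero root and $f$ is not bent (the case $\lambda = 0$, where $f \equiv 0$ and $0 = 0^{2^k+1}$, being trivial). The real content is the converse, which amounts to showing $\mu H = H$, i.e.\ that every $(2^k-1)$-th root of unity is already a $(2^k+1)$-th power; granting this, $\lambda \in \mu H = H$ means $\lambda$ is a $(2^k+1)$-th power, so $\lambda \in \{x^{2^k+1} \mid x \in \mathbb{F}_{2^n}\}$.

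I expect this containment $\mu \subseteq H$ to be the main obstacle, and it is where I would invoke the structure of the cyclic group $\mathbb{F}_{2^n}^*$. Since the group is cyclic, $\mu \subseteq H$ holds if and only if $|\mu| = 2^{\gcd(k,n)} - 1$ divides $|H| = (2^n - 1)/\gcd(2^k+1, 2^n-1)$. Writing $d = \gcd(k,n)$ and using $\gcd(2^k - 1, 2^n - 1) = 2^d - 1$ together with the standard evaluation of $\gcd(2^k+1, 2^n-1)$ (namely $2^d + 1$ when $n/d$ is even and $1$ when $n/d$ is odd), a short divisibility check—relying on $2^{2d} - 1 \mid 2^n - 1$ in the even case—confirms $2^d - 1 \mid |H|$ in both cases. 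This settles the converse and completes the characterization; the hypothesis that $n$ is even enters only through the fact that quadratic bent functions can exist, the equation analysis itself being valid for all $n$.
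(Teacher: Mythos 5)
Your proposal is correct. Note, however, that the paper does not prove Lemma~\ref{gold-bent} at all: it is imported verbatim from \cite{DL04}, so there is no in-paper argument to compare against. What you have done is supply a self-contained proof, and it is sound. The first half exactly parallels the paper's own treatment of the negabent case in Proposition~\ref{quadprop}: expanding $(x+a)^{2^k+1}$ and pushing a Frobenius through the trace reduces bentness to the linearized polynomial $L(a)=\lambda^{2^{n-k}}a^{2^{n-k}}+\lambda a^{2^k}$ having no nonzero root, which is precisely the "complete mapping" observation the paper makes just after the lemma. The second half, where the real content lies, is the reduction of the root condition to $(\lambda a^{2^k+1})^{2^k-1}=1$ and the containment $\mu\subseteq H$ of the $(2^k-1)$-th roots of unity in the group of $(2^k+1)$-th powers; your divisibility check via $|\mu|=2^d-1$, $|H|=(2^n-1)/\gcd(2^k+1,2^n-1)$, and the standard evaluation of $\gcd(2^k+1,2^n-1)$ (equal to $2^d+1$ or $1$ according to the parity of $n/d$) is correct in both cases, using that subgroups of a cyclic group are determined by their orders. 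Your closing remark is also accurate: when $n/d$ is odd the power map is a bijection, every $\lambda$ lies in $H$, and the criterion correctly reports that $f$ is never bent, so the statement is uniformly valid. The only stylistic caveat is that you should make explicit that $H$ denotes the $(2^k+1)$-th powers of \emph{nonzero} elements when forming the coset $\mu H$, with $\lambda=0$ handled separately as you do; with that said, nothing is missing.
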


Note that $\lambda^{2^{n-k}} x^{2^{n-k}} + \lambda x^{2^k}$
is a permutation if and only if
$\lambda \notin \{x^{2^k+1} \mid x \in \mathbb F_{2^n}\}$.
Therefore, if $f(x) = Tr(\lambda x^{2^k+1})$
is negabent, then Proposition \ref{prop-nega-perm} tells that
$\lambda^{2^{n-k}} x^{2^{n-k}} + \lambda x^{2^k} + x$
is also a permutation polynomial, {\em i.e.}, 
$\lambda^{2^{n-k}} x^{2^{n-k}} + \lambda x^{2^k}$
is a complete mapping polynomial.
We summarize these results as follows. 

\begin{theorem}
\label{th-bent-negabent}
Let $\lambda \in \mathbb F_{2^n}$ where $n$ is even.
The function 
$f : \mathbb F_{2^n} \rightarrow \mathbb F_2$
with $$f(x) = Tr(\lambda x^{2^k+1})$$
is bent negabent if and only if one of the following two 
equivalent statements holds.
\begin{enumerate} 
\item
$\lambda^{2^{n-k}} x^{2^{n-k}} + \lambda x^{2^k}$
is a complete mapping polynomial.
\item
$\lambda$ is neither of the form
$\frac{v_0^{2^{2k}+1}}{(v_0+v_1)^{2^k+1}}$ nor of the form
$v^{2^k+1}$ for 
$v \in \mathbb F_{2^n}$,
$v_0 \in \mathbb F_{2^n} \setminus \mathbb F_{2^d}$ and $v_1 = v_0^{2^k}$,
where $\gcd(k, n) = d$ and $n = td$.
\end{enumerate}
\end{theorem}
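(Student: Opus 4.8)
The plan is to split the bent--negabent condition into its two constituents---bentness and negabentness---treat each by the characterizations already assembled in this subsection, and then observe that both constituents are governed by the single linearized polynomial
$$Q(x) = \lambda^{2^{n-k}} x^{2^{n-k}} + \lambda x^{2^k}.$$
Statement~1 will emerge by phrasing both constituents as permutation properties of $Q$, while Statement~2 will emerge by phrasing them as exclusion conditions on $\lambda$. Throughout I would use the fact, recorded after (\ref{eq-main-bent}), that $f$ is bent--negabent exactly when it is simultaneously bent and negabent.

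For the bentness half, I would first note that $f$ is bent precisely when, for every $a \in \F_{2^n}^*$, the derivative $f(x)+f(x+a)$ is balanced. Expanding $(x+a)^{2^k+1} = x^{2^k+1} + a x^{2^k} + a^{2^k} x + a^{2^k+1}$ and applying the Frobenius-invariance of the trace to rewrite $Tr(\lambda a x^{2^k}) = Tr(\lambda^{2^{n-k}} a^{2^{n-k}} x)$, the derivative becomes $Tr\bigl((\lambda^{2^{n-k}} a^{2^{n-k}} + \lambda a^{2^k})x\bigr) + Tr(\lambda a^{2^k+1})$. Its linear part is balanced iff the coefficient $Q(a)$ is nonzero, so $f$ is bent iff $Q$ has no nonzero root, i.e.\ (being linearized) iff $Q$ is a permutation; by Lemma~\ref{gold-bent} this is equivalent to $\lambda \notin \{v^{2^k+1} \mid v \in \F_{2^n}\}$. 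This is exactly the remark stated just before the theorem, so I would simply cite it.

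For the negabentness half, Proposition~\ref{prop-nega-perm} already gives that $f$ is negabent iff $P(x) = Q(x) + x$ is a permutation, and the preceding theorem characterizing quadratic negabent monomials reformulates this as: $\lambda$ is not of the form $\frac{v_0^{2^{2k}+1}}{(v_0+v_1)^{2^k+1}}$ with $v_0 \in \F_{2^n}\setminus\F_{2^d}$ and $v_1 = v_0^{2^k}$, under $d = \gcd(k,n)$ and $n = td$. I would invoke these two results directly.

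Finally I would assemble. Combining the two halves, $f$ is bent--negabent iff $Q$ is a permutation \emph{and} $Q+x$ is a permutation---which is precisely the definition of $Q$ being a complete mapping polynomial, giving Statement~1. Equally, $f$ is bent--negabent iff $\lambda$ avoids \emph{both} excluded forms, giving Statement~2; the equivalence of the two statements is then automatic. I expect no genuine analytic obstacle, since the substantive content resides in the imported Lemma~\ref{gold-bent} and the negabent-monomial characterization. The only real care needed is the bookkeeping observation that one and the same polynomial $Q$ controls both constituents---so that ``bent'' contributes ``$Q$ is a permutation'' and ``negabent'' contributes ``$Q+x$ is a permutation''---together with carrying the gcd hypotheses $d=\gcd(k,n)$, $n=td$ along so that the root-form characterization of Statement~2 legitimately applies.
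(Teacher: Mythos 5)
Your proposal is correct and follows essentially the same route as the paper: the paper's ``proof'' is precisely the paragraph preceding the theorem, which notes that $\lambda^{2^{n-k}}x^{2^{n-k}}+\lambda x^{2^k}$ is a permutation iff $\lambda\notin\{v^{2^k+1}\}$ (Lemma~\ref{gold-bent}, i.e.\ bentness) and invokes Proposition~\ref{prop-nega-perm} for negabentness, so that bent--negabent amounts to this polynomial being a complete mapping, with Statement~2 obtained by combining the two exclusion forms. Your only addition is spelling out the derivative computation behind the ``Note that'' step, which the paper leaves implicit.
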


The existence of quadratic bent-negabent functions is known 
\cite[Theorem 5]{PP07}.
However, we reprove the same result by simple counting argument
and using the previous characterization of the bent-negabent functions.
\begin{theorem}
For all $n \geq 4$, quadratic bent-negabent functions always exist.
\end{theorem}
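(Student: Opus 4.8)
The plan is to argue by counting, using the second (purely arithmetic) characterization in Theorem \ref{th-bent-negabent}. For a fixed $k$ with $\gcd(k,n)=d$ and $n=td$, the function $f(x)=Tr(\lambda x^{2^k+1})$ is bent-negabent precisely when $\lambda$ avoids the union of the two ``forbidden'' sets
\[
B = \{\, v^{2^k+1} : v \in \mathbb{F}_{2^n}\,\}
\quad\text{and}\quad
N = \left\{ \frac{v_0^{2^{2k}+1}}{(v_0+v_1)^{2^k+1}} : v_0 \in \mathbb{F}_{2^n}\setminus\mathbb{F}_{2^d},\ v_1=v_0^{2^k}\right\}.
\]
Since bent functions only exist in even dimension, the statement concerns even $n\ge 4$, for which Theorem \ref{th-bent-negabent} applies. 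It therefore suffices to exhibit, for each such $n$, a choice of $k$ with $|B\cup N| < 2^n=|\mathbb{F}_{2^n}|$; any $\lambda$ in the complement then yields a quadratic bent-negabent monomial.

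I would take the convenient choice $k=n/2$, so that $d=\gcd(n/2,n)=n/2$ and $t=2$ is even. With this choice both cardinalities become clean. For $B$, the map $v\mapsto v^{2^{n/2}+1}$ on $\mathbb{F}_{2^n}^*$ is $e$-to-one with $e=\gcd(2^{n/2}+1,\,2^n-1)$; since $2^n-1=(2^{n/2}-1)(2^{n/2}+1)$ we get $e=2^{n/2}+1$, whence
\[
|B|=\frac{2^n-1}{2^{n/2}+1}+1=2^{n/2},
\]
the final $+1$ accounting for $\lambda=0$. For $N$, I would read its size directly off Lemma \ref{root-zt} in the even-$t$ branch: with $d=n/2$,
\[
|N|=\frac{2^{n+d}-2^d}{2^{2d}-1}=\frac{2^{3n/2}-2^{n/2}}{2^n-1}=\frac{2^{n/2}(2^n-1)}{2^n-1}=2^{n/2}.
\]

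Combining via the crude union bound, $|B\cup N|\le |B|+|N| = 2^{n/2}+2^{n/2}=2^{n/2+1}$, which is strictly less than $2^n$ whenever $n/2+1<n$, i.e. for all $n>2$, hence for all (even) $n\ge 4$. Thus at least $2^n-2^{n/2+1}\ge 2^4-2^3=8>0$ values of $\lambda$ lie outside $B\cup N$; each such $\lambda$ is automatically nonzero (as $0\in B$) and gives, by Theorem \ref{th-bent-negabent}, a bent-negabent function $f(x)=Tr(\lambda x^{2^{n/2}+1})$. Finally, since the exponent $2^{n/2}+1$ has binary weight $2$ and $\lambda\ne 0$, the algebraic degree of $f$ is exactly $2$, so $f$ is genuinely quadratic.

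The only delicate point is the bookkeeping of the two cardinalities—in particular the gcd computation $\gcd(2^{n/2}+1,2^n-1)=2^{n/2}+1$ governing $|B|$, and selecting the even-$t$ formula of Lemma \ref{root-zt} for $|N|$. Once these both evaluate to $2^{n/2}$, the strict inequality $2^{n/2+1}<2^n$ for $n\ge 4$ closes the argument; notably, the overlap $B\cap N$ need not be analyzed at all, since the union bound already leaves room.
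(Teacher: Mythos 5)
Your proof is correct and takes essentially the same route as the paper: a counting/union-bound argument over the two forbidden sets from condition 2 of Theorem \ref{th-bent-negabent}, with Lemma \ref{root-zt} supplying the size of the second set; the only difference is that you fix $k=n/2$ so that both sets have exactly $2^{n/2}$ elements, whereas the paper keeps $k$ arbitrary, bounds $|S_1|$ by $(2^n-1)/3$ via $\gcd(2^k+1,2^n-1)\ge 3$, and closes with the inequality $2\cdot 2^{2d}-3\cdot 2^d-2\ge 0$. One minor imprecision: ``$\lambda\ne 0$ and the exponent has binary weight $2$'' does not by itself guarantee degree exactly $2$ (for $k=n/2$ and $\lambda\in\mathbb{F}_{2^{n/2}}^*$ the function $Tr(\lambda x^{2^{n/2}+1})$ is identically zero), but this is harmless in your argument since such $\lambda$ lie in $B$ and are excluded, and any $\lambda\notin B$ yields a bent, hence genuinely quadratic, function.
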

\begin{proof} We show that there always exists a $\lambda \in \mathbb F_{2^n}$
which satisfies the condition 2 of Theorem \ref{th-bent-negabent}.

If $\gcd(2^k+1, 2^n-1) = 1$, then $x \mapsto x^{2^k+1}$
is a bijection. Then for any $\lambda \in \mathbb F_{2^n}$ there exists
$x \in \mathbb F_{2^n}$ such that $\lambda = x^{2^k+1}$. Therefore, if 
$Tr(\lambda x^{2^k+1})$ is bent then $\gcd(2^k+1, 2^n-1) > 1$.
Since $2$ does not divide both of $2^k+1$ and $2^n-1$. Therefore,
$\gcd(2^k+1, 2^n-1) \geq 3$. 
Let $S_1 = \{x^{2^k+1} | x\in \mathbb F_{2^n}\},$ then $|S_1| \leq \frac{2^n-1}{3}$.
On the other hand, if $\gcd(k, n) = d$ and $n = td$, then
by Lemma \ref{root-zt}, we know that the number of possible
$\lambda \in \mathbb F_{2^n}$ such that $\lambda$ is of the form
$\frac{v_0^{2^{2k}+1}}{(v_0+v_1)^{2^k+1}}$ is
$\begin{cases}
\frac{2^{n+d} - 2^d}{2^{2d} - 1} \mbox{ for even } t\\
\frac{2^{n+d} - 2^{2d}}{2^{2d} - 1} \mbox{ for odd } t
\end{cases}$.
Let $S_2 = \{y \in \mathbb F_{2^n} | y = \frac{v_0^{2^{2k}+1}}{(v_0+v_1)^{2^k+1}}\}$.
Note that $|S_1 \cup S_2| = |S_1| + |S_2| - |S_1 \cap S_2|$. Then
$$|S_1 \cup S_2| \leq  \frac{2^n-1}{3} 
+ \frac{2^{n+d}-2^{d}}{2^{2d}-1} - |S_1 \cap S_2|.$$ 
Therefore,
\begin{eqnarray*}
2^n - |S_1 \cup S_2| & \geq & 2^n - \frac{2^n-1}{3} 
- \frac{2^{n+d} - 2^d}{2^{2d} - 1} + |S_1 \cap S_2|\\
\ & = & (2^n -1).\frac{2.2^{2d} - 3.2^d - 2}{3(2^{2d}-1)}
+ |S_1 \cap S_2| + 1\\
\ & \geq & |S_1 \cap S_2| + 1, ~~~~\mbox{ since } 2.2^{2d} - 3.2^d - 2 \geq 0.
\end{eqnarray*}

Therefore, we see that a $\lambda \in \mathbb F_{2^n}$ always exists
that satisfies Condition 2 of Theorem \ref{th-bent-negabent}.

This proves the theorem.

\end{proof}

Below we characterize bent-negabent functions when $n = 2k$.
\begin{proposition} 
Let $n = 2k$ and $f: x \mapsto Tr(\lambda x^{2^k+1})$ be a quadratic function
defined over $\mathbb F_{2^n}$.
Then $f$ is negabent if and only if
$\lambda + \lambda^{2^k} \ne 1$. Moreover, $f$ is bent-negabent
if and only if 
$\lambda + \lambda^{2^k} \notin \mathbb F_2$.
\end{proposition}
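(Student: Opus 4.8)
The plan is to treat the negabent condition and the bent condition separately and then intersect them, exploiting the fact that when $n=2k$ the Frobenius $x\mapsto x^{2^k}$ has fixed field $\mathbb F_{2^k}$.

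First I would specialize Proposition \ref{quadprop} to $n=2k$, where $2^{n-k}=2^k$. The negabent criterion $\lambda^{2^{n-k}}a^{2^{n-k}}+\lambda a^{2^k}+a\ne 0$ then collapses to $(\lambda+\lambda^{2^k})a^{2^k}+a\ne 0$ for all $a\in\mathbb F_{2^n}^*$. Writing $\mu=\lambda+\lambda^{2^k}$, I would first note that $\mu^{2^k}=\lambda^{2^k}+\lambda^{2^{2k}}=\lambda^{2^k}+\lambda=\mu$, so $\mu\in\mathbb F_{2^k}$. The heart of this half is to decide exactly when the linearized equation $\mu a^{2^k}+a=0$ admits a nonzero root. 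I would argue as follows: if $a\ne 0$ solves it, then raising to the $2^k$-th power and using $\mu^{2^k}=\mu$ together with $a^{2^{2k}}=a$ gives $\mu a=a^{2^k}$; multiplying this with $\mu a^{2^k}=a$ yields $\mu^2 a^{2^k+1}=a^{2^k+1}$, hence $\mu^2=1$, i.e.\ $\mu=1$ in characteristic $2$. Conversely, $\mu=1$ forces $a^{2^k}=a$, which holds for every $a\in\mathbb F_{2^k}^*$, so a nonzero root exists. Therefore $f$ is negabent precisely when $\mu\ne 1$, that is $\lambda+\lambda^{2^k}\ne 1$.

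For the bent half I would invoke Lemma \ref{gold-bent}: $f$ is bent iff $\lambda\notin\{x^{2^k+1}\mid x\in\mathbb F_{2^n}\}$. The key identification is that this image set equals $\mathbb F_{2^k}$, since $x^{2^k+1}$ is the relative norm $N_{\mathbb F_{2^n}/\mathbb F_{2^k}}(x)$, which maps $\mathbb F_{2^n}^*$ surjectively onto $\mathbb F_{2^k}^*$; together with $x=0$ the image is all of $\mathbb F_{2^k}$. Hence $f$ is bent iff $\lambda\notin\mathbb F_{2^k}$, and since $\lambda\in\mathbb F_{2^k}\iff\lambda^{2^k}=\lambda\iff\mu=0$, bentness is equivalent to $\mu\ne 0$.

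Finally I would combine the two descriptions: $f$ is bent-negabent iff $\mu\ne 0$ and $\mu\ne 1$, i.e.\ $\mu=\lambda+\lambda^{2^k}\notin\{0,1\}=\mathbb F_2$, which is exactly the claimed equivalence. The main obstacle is the forward direction of the negabent equivalence, namely extracting $\mu=1$ from the mere existence of a nonzero root; the Frobenius-twisting argument above is the cleanest route, and the rest is routine bookkeeping once $\mu$ is recognized as lying in $\mathbb F_{2^k}$.
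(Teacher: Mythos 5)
Your proposal is correct and follows essentially the same route as the paper: specialize the negabent permutation criterion to $n=2k$, observe that $\mu=\lambda+\lambda^{2^k}$ lies in $\mathbb F_{2^k}$, determine that $\mu=1$ is the unique bad value, and combine with Lemma \ref{gold-bent} for the bent half. The only (cosmetic) differences are that the paper settles the nonzero-root question via the subgroup $\{x^{2^k-1}\mid x\in\mathbb F_{2^n}^*\}$ meeting $\mathbb F_{2^k}$ only in $\{1\}$ whereas you extract $\mu^2=1$ by a direct Frobenius computation, and you are slightly more careful in making explicit the surjectivity of the norm $x\mapsto x^{2^k+1}$ onto $\mathbb F_{2^k}$, which the paper leaves implicit.
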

\begin{proof}
By Proposition \ref{prop-nega-perm} we have that
$f$ is negabent if and only if
$P(x) = \lambda^{2^{n-k}} x^{2^{n-k}} + \lambda x^{2^k} + x$ is a permutation,
{\em i.e.}, $P(x)^{2^k} = \lambda x + \lambda^{2^k} x^{2^{2k}} + x^{2^k}$ 
is a permutation. Since $n = 2k$, therefore,
$P(x)^{2^k} = (\lambda + \lambda^{2^k})x + x^{2^k}$. Now
$(\lambda + \lambda^{2^k})x + x^{2^k}$ is permutation if and only if
$(\lambda + \lambda^{2^k})x + x^{2^k} \ne 0$,
{\em i.e.}, $\lambda + \lambda^{2^k} \ne x^{2^k-1}$, for all $x \in \mathbb F_{2^n}^*$.
Note that $\lambda + \lambda^{2^k} \in \mathbb F_{2^k}$ for all 
$\lambda \in \mathbb F^{2^n}$ and the mapping
$\lambda \mapsto \lambda + \lambda^{2^k}$ is onto.
Let us consider the group
$G = \{x^{2^k-1}| x \in \mathbb F_{2^n}^*\}$. The intersection
of $\mathbb F_{2^k}$ and $G$ is $\{1\}$.
Therefore, $Tr(\lambda x^{2^k+1})$ is negabent if and only if
$\lambda + \lambda^{2^k} \ne 1$.
We know that $f$ is bent if and only if $\lambda \ne x^{2^k+1}$
for some $x \in \mathbb F_{2^n}$. Note that if $\lambda = x^{2^k+1}$
then $\lambda \in \mathbb F_{2^k}$ and $\lambda + \lambda^{2^k} = 0$.
Therefore, $f$ is bent-negabent if and only if 
$\lambda + \lambda^{2^k} \notin \mathbb F_{2}$.

\end{proof}

\section{Maiorana-McFarland bent-negabent functions}
Maiorana-McFarland is an important class of bent functions
which was extensively studied by Dillon \cite[pp.~90-95]{Dillon}. This
class is usually called the {\em class $\cal M$} of bent functions.
\begin{lemma}\label{mmf}
Let $n=2t$. Let us consider a Boolean function $f$ defined by
\begin{equation}\label{eq:mmf}
f~:~(x,y)\in \mathbb{F}_{2^t} \times \mathbb{F}_{2^t}
  ~~\mapsto ~~Tr^t_1\left(x\pi(y)+h(y)\right)
\end{equation}
where $\pi$ is a function over $\mathbb{F}_{2^t}$ and $h$ is any function
on $\mathbb{F}_{2^t}$. Then $f$ is a bent function if and only if
$\pi$ is a bijection.
\end{lemma}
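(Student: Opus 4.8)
The plan is to apply the negabent characterization from Proposition~\ref{quadprop}-style reasoning, but for the Maiorana-McFarland form we should instead prove the bentness claim of Lemma~\ref{mmf} directly, since that is the statement at hand. First I would compute the Hadamard-Walsh transform of $f$ at an arbitrary point $(\alpha,\beta)\in\mathbb{F}_{2^t}\times\mathbb{F}_{2^t}$ by splitting the sum over $x$ and $y$:
\begin{equation*}
\mathcal{H}_f(\alpha,\beta) = \frac{1}{2^t}\sum_{y\in\mathbb{F}_{2^t}}(-1)^{Tr^t_1(h(y)+\beta y)}\sum_{x\in\mathbb{F}_{2^t}}(-1)^{Tr^t_1(x(\pi(y)+\alpha))}.
\end{equation*}
The key observation is that the inner sum over $x$ is a character sum: it equals $2^t$ when $\pi(y)+\alpha=0$ and $0$ otherwise, because $x\mapsto Tr^t_1(xc)$ is a nontrivial linear functional precisely when $c\ne 0$ (here the self-dual basis and nondegeneracy of the trace form are what I would invoke).

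Next I would use this collapse to reduce the double sum to a sum over those $y$ with $\pi(y)=\alpha$. If $\pi$ is a bijection, there is exactly one such $y$, say $y_\alpha=\pi^{-1}(\alpha)$, so $\mathcal{H}_f(\alpha,\beta)=(-1)^{Tr^t_1(h(y_\alpha)+\beta y_\alpha)}$, whose absolute value is $1$ for every $(\alpha,\beta)$; hence $f$ is bent. This gives the ``if'' direction cleanly.

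For the converse, I would argue contrapositively: suppose $\pi$ is not a bijection. Then some value $\alpha_0\in\mathbb{F}_{2^t}$ is either omitted from the image or attained more than once. If $\alpha_0$ is omitted, the reduced sum is empty and $\mathcal{H}_f(\alpha_0,\beta)=0$ for all $\beta$, violating $|\mathcal{H}_f|=1$. If $\alpha_0$ is attained at several preimages, one can choose $\beta$ so that the surviving signed terms do not all have absolute value summing to $1$ after normalization; more robustly, a counting argument via Parseval ($\sum_{(\alpha,\beta)}|\mathcal{H}_f(\alpha,\beta)|^2=2^n$) shows that the presence of at least one zero transform value forces some other value to have magnitude exceeding $1$, again contradicting bentness. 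The step I expect to be the main obstacle is the converse, specifically ruling out bentness in the multiply-attained case cleanly; the omitted-value case is immediate, but handling the general non-surjective (equivalently, since $\pi$ maps a finite set to itself, non-injective) situation requires care, and I would lean on Parseval's identity to avoid a delicate case analysis over the choices of $\beta$.
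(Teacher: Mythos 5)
Your proof is correct; note that the paper itself gives no proof of Lemma~\ref{mmf} --- it is stated as the classical Maiorana--McFarland result going back to Dillon --- so your argument is exactly the standard one that the author implicitly relies on. The Walsh-transform computation in the forward direction is right, including the normalization $1/2^{t}=1/2^{n/2}$ and the collapse of the inner character sum to $2^{t}\cdot[\pi(y)=\alpha]$. The one place where you are working harder than necessary is the converse: since $\pi$ maps the finite set $\mathbb{F}_{2^t}$ to itself, ``not a bijection'' already implies ``not surjective,'' so there is always an omitted value $\alpha_0$, and $\mathcal{H}_f(\alpha_0,\beta)=0$ for every $\beta$ kills bentness immediately. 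The separate ``multiply-attained'' case and the appeal to Parseval are therefore redundant (though the Parseval argument is itself valid: a zero spectral value forces, by $\sum_{(\alpha,\beta)}|\mathcal{H}_f(\alpha,\beta)|^2=2^{n}$, some value of magnitude exceeding $1$). With that simplification the whole lemma follows in a few lines from the single identity you wrote down.
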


\begin{theorem}
\label{th:mmf}
Let $f$ be a Maiorana-McFarland function as in Lemma \ref{mmf}.
Then $f$ is negabent if and only if for all $a, b \in \mathbb F_{2^t}^*$ 
\begin{equation}
\sum_{y \in Y_{a,b}} (-1)^{Tr_1^t(a \pi(y)) + h(y) + h(y+b) +by)} = 0,
\end{equation}
where $Y_{a,b} = \{ y \in \mathbb F_{2^t} | \pi(y) + \pi(y+b) = a\}$ such that
$Y_{a, b}$ is non empty.
\end{theorem}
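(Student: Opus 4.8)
The plan is to evaluate the negabent sum (\ref{eq-main-nega}) directly on the product domain, writing a point as $(x,y)$ and a shift as $(a,b)$ with $x,y,a,b \in \mathbb{F}_{2^t}$. Identifying $\mathbb{F}_2^n$ with $\mathbb{F}_{2^t}\times\mathbb{F}_{2^t}$ and using self-dual bases on each factor, the linear term of (\ref{eq-main-nega}) splits as $(a,b)\cdot(x,y)=Tr_1^t(ax)+Tr_1^t(by)$. First I would expand the derivative $f(x,y)+f(x+a,y+b)$ using (\ref{eq:mmf}) and collect the terms that are linear in $x$; the entire exponent then becomes $Tr_1^t\bigl(x(\pi(y)+\pi(y+b)+a)+a\pi(y+b)+h(y)+h(y+b)+by\bigr)$.

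The key step is to carry out the inner summation over $x$ before the one over $y$. Since $x\mapsto Tr_1^t(cx)$ is balanced for $c\neq 0$ and identically $1$ for $c=0$, the inner sum equals $2^t$ when $\pi(y)+\pi(y+b)+a=0$ and $0$ otherwise; that is, it is $2^t$ times the indicator of $y\in Y_{a,b}$. This collapses the double sum to $2^t\sum_{y\in Y_{a,b}}(-1)^{Tr_1^t(a\pi(y+b)+h(y)+h(y+b)+by)}$, so the whole expression vanishes exactly when the sum over $Y_{a,b}$ does. To match the stated exponent I would use that on $Y_{a,b}$ one has $\pi(y+b)=\pi(y)+a$, whence $a\pi(y+b)=a\pi(y)+a^2$ and $Tr_1^t(a^2)=Tr_1^t(a)$ is a constant independent of $y$; pulling the sign $(-1)^{Tr_1^t(a)}$ and the factor $2^t$ out of the sum leaves precisely $\sum_{y\in Y_{a,b}}(-1)^{Tr_1^t(a\pi(y)+h(y)+h(y+b)+by)}=0$, and neither constant affects whether the sum is zero. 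When $Y_{a,b}$ is empty the sum is vacuously zero, so no condition is imposed there.

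The one thing to handle carefully is the range of shifts: (\ref{eq-main-nega}) must hold for all $(a,b)\neq(0,0)$, whereas the theorem imposes a condition only for $a,b\in\mathbb{F}_{2^t}^*$. I would therefore dispose of the two boundary cases first. When $a\neq 0$ but $b=0$, the exponent reduces to $Tr_1^t(a\pi(y)+ax)$ and the $x$-sum $\sum_{x}(-1)^{Tr_1^t(ax)}$ already vanishes, so the negabent condition is automatic. When $a=0$ but $b\neq 0$, the inner $x$-sum is $2^t$ times the indicator of $\pi(y)=\pi(y+b)$, which never holds because $\pi$ is a bijection (Lemma \ref{mmf}) and $b\neq 0$; hence that sum is zero as well. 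Thus only the case $a,b\neq 0$ carries content. The main obstacle is really just this bookkeeping together with the constant-term simplification via $\pi(y+b)=\pi(y)+a$ on $Y_{a,b}$; the separation of variables afforded by the Maiorana--McFarland form, and the bijectivity of $\pi$ needed for the boundary cases, do all the substantive work.
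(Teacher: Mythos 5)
Your proposal is correct and follows essentially the same route as the paper's proof: expand the derivative, sum over $x$ first to get $2^t$ times the indicator of $Y_{a,b}$, dispose of the cases $b=0$ and $a=0$ using balancedness of $Tr_1^t(ax)$ and the bijectivity of $\pi$ respectively, and use $\pi(y+b)=\pi(y)+a$ on $Y_{a,b}$ to pull out the constant sign $(-1)^{Tr_1^t(a^2)}$. No substantive differences.
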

\begin{proof}
From (\ref{eq-main-nega}) we have, $f(x, y)$ is negabent
if and only if for all $(a, b) \in \mathbb F_{2^t} \times \mathbb F_{2^t} \setminus \{(0,0)\}$
\begin{eqnarray}
\label{eq:mmf-nega}
\nonumber
\sum_{(x, y) \in \mathbb F_{2^t} \times \mathbb F_{2^t}}
(-1)^{f(x,y) + f(x+a, y+b) + Tr_1^t(ax) + Tr_1^t(by)} & = & 0\\
\nonumber
\sum_{(x, y) \in \mathbb F_{2^t} \times \mathbb F_{2^t}}
(-1)^{Tr_1^t(x(\pi(y) + \pi(y+b)+a)) + Tr_1^t(a \pi(y+b) + h(y) + h(y+b) + by)} & = & 0.
\end{eqnarray}
Let 
$$S_{a,b} = \displaystyle \sum_{(x, y) \in \mathbb F_{2^t} \times \mathbb F_{2^t}}
(-1)^{Tr_1^t(x(\pi(y) + \pi(y+b)+a)) + Tr_1^t(a \pi(y+b) + h(y) + h(y+b) + by)}.$$
We treat the sum $S_{a,b}$ in the following cases.

\noindent
{\bf CASE 1:} For $a \ne 0$ and $b = 0$.
Then
\begin{eqnarray*}
S_{a,b} & = & \sum_{(x, y) \in \mathbb F_{2^t} \times \mathbb F_{2^t}}
(-1)^{Tr_1^t(ax) + Tr_1^t(a \pi(y))}\\
\ & = & \sum_{x \in \mathbb F_{2^t}} (-1)^{Tr_1^t(ax)}
\sum_{y \in \mathbb F_{2^t}} (-1)^{Tr_1^t(a\pi(y))}\\
\ & = & 0. 
\end{eqnarray*}

\noindent
{\bf CASE 2:} For $a = 0$ and $b \ne 0$.
Then
\begin{eqnarray*}
S_{a,b} & = & \sum_{(x, y) \in \mathbb F_{2^t} \times \mathbb F_{2^t}}
(-1)^{Tr_1^t(x(\pi(y) + \pi(y+b))) + Tr_1^t(h(y) + h(y+b) + by)}\\
\ & = & \sum_{y \in \mathbb F_{2^t}} (-1)^{Tr_1^t(h(y) + h(y+b) + by)}
\sum_{x \in \mathbb F_{2^t}} (-1)^{Tr_1^t(x(\pi(y) + \pi(y+b)))}\\
\ & = & \sum_{y \in \mathbb F_{2^t}} (-1)^{Tr_1^t(h(y) + h(y+b) + by)} \times 0\\
\ & \ & ~~~~~~~~~~~~~\mbox{ since $\pi$ is a permutation, } \pi(y) \ne \pi(y+b)\\
\ & = & 0.
\end{eqnarray*}

\noindent
{\bf CASE 3:} For $a \ne 0$ and $b \ne 0$.
Then
\begin{eqnarray*}
S_{a,b} & = & \sum_{(x, y) \in \mathbb F_{2^t} \times \mathbb F_{2^t}}
(-1)^{Tr_1^t(a \pi(y+b) + h(y) + h(y+b) + by) + Tr_1^t(x(\pi(y) + \pi(y+b)+a))}\\
\ & = & \sum_{y \in \mathbb F_{2^t}} (-1)^{Tr_1^t(a \pi(y+b) + h(y) + h(y+b) + by)}
\sum_{x \in \mathbb F_{2^t}} (-1)^{Tr_1^t(x(\pi(y) + \pi(y+b)+a))}.
\end{eqnarray*}
If there exists some $y$ such that $y \notin Y_{a,b}$,
{\em i.e.}, $\pi(y) + \pi(y+b) \ne a$,
then $$\sum_{x \in \mathbb F_{2^t}} (-1)^{Tr_1^t(x(\pi(y) + \pi(y+b)+a))} = 0.$$
On the other hand if $y \in Y_{a,b}$, {\em i.e.}, $\pi(y) + \pi(y+b) = a$,
then $$\sum_{x \in \mathbb F_{2^t}} (-1)^{Tr_1^t(x(\pi(y) + \pi(y+b)+a))} = 2^t.$$
Therefore,
\begin{eqnarray*}
S_{a,b} & = & 2^t\sum_{y \in Y_{a,b}} (-1)^{Tr_1^t(a \pi(y+b) + h(y) + h(y+b) + by)}\\
\ & = & 2^t\sum_{y \in Y_{a,b}} (-1)^{Tr_1^t(a \pi(y) + a^2 + h(y) + h(y+b) + by)}\\
\ & \ & ~~~~~~~~~~~~ \mbox{since $\pi(y+b) = \pi(y) + a$ for $y \in Y_{a,b}$}\\
\ & = & 2^t (-1)^{Tr(a^2)} \sum_{y \in Y_{a,b}}(-1)^{Tr_1^t(a \pi(y) + h(y) + h(y+b) + by)}.
\end{eqnarray*}

Therefore, $S_{a,b} = 0$ if and only if
$$\displaystyle\sum_{y \in Y_{a,b}}(-1)^{Tr_1^t(a \pi(y) + h(y) + h(y+b) + by)} = 0.$$

Thus after discussing all the above cases it is clear that the Maiorana-McFarland 
bent function $f$ is negabent if and only if
$$\sum_{y \in Y_{a,b}}(-1)^{Tr_1^t(a \pi(y) + h(y) + h(y+b) + by)} = 0.$$

\end{proof}

This Theorem gives us the clue to construct negabent functions over the finite
fields that belong to the class of Maiorana-McFarland bent functions.

\begin{definition}
A mapping $F : \mathbb{F}_{2^n} \rightarrow \mathbb{F}_{2^n}$ is
called homomorphic if $F(x + y) = F(x) + F(y)$ and $F(xy) = F(x) F(y)$
for all $x, y \in \mathbb{F}_{2^n}$.
\end{definition}
The only possible homomorphic permutation over $\mathbb{F}_{2^n}$
is of the form $x \mapsto x^{2^i}$. Note that $Tr_1^n(x) = Tr_1^n(x^{2^i})$,
therefore the mapping $x \mapsto Tr_1^n(x)$ is invariant under the
action of this permutation. 
Using this observation we show an interesting consequence 
of Theorem \ref{th:mmf}, when the permutation $\pi$ is chosen as 
$\pi(x) = x^{2^i}$. 

\begin{theorem}
\label{homo}
Let $f : (x,y) \in \mathbb F_{2^{t}} \times \mathbb F_{2^{t}} \mapsto
\mathbb{F}_{2}$ be a Maiorana-McFarland bent function
given by
\begin{equation}
f(x,y) = Tr_1^t(x y^{2^i} + h(y)),
\end{equation}
Then $f$ is negabent if and only if $Tr_1^t(h(y))$ is a
bent function over $\mathbb F_{2^t}$.
\end{theorem}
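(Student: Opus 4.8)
The plan is to specialize Theorem~\ref{th:mmf} to the permutation $\pi(y)=y^{2^i}$ and watch the negabent condition collapse onto the bent criterion (\ref{eq-main-bent}) for the single-variable function $Tr_1^t(h(y))$. First I would compute the relevant derivative of $\pi$: since we are in characteristic $2$, the Frobenius identity (freshman's dream) gives $\pi(y)+\pi(y+b)=y^{2^i}+(y+b)^{2^i}=b^{2^i}$, which is \emph{independent of} $y$. This is the crucial simplification, as it forces the set $Y_{a,b}=\{y:\pi(y)+\pi(y+b)=a\}$ appearing in Theorem~\ref{th:mmf} to be trivial: it equals all of $\mathbb F_{2^t}$ when $a=b^{2^i}$, and is empty otherwise.

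Next I would exploit the fact that Theorem~\ref{th:mmf} imposes its summation condition only for pairs $(a,b)$ with $Y_{a,b}$ nonempty; all remaining pairs are automatically satisfied by the empty sum. Hence the only pairs to check are $a=b^{2^i}$ with $b\in\mathbb F_{2^t}^*$, and for each of these $Y_{a,b}=\mathbb F_{2^t}$. Substituting $a=b^{2^i}$ and $\pi(y)=y^{2^i}$ into the exponent, the key term simplifies as $Tr_1^t(a\pi(y))=Tr_1^t(b^{2^i}y^{2^i})=Tr_1^t\big((by)^{2^i}\big)=Tr_1^t(by)$, using the invariance of the trace under the Frobenius map. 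This $Tr_1^t(by)$ then cancels the $Tr_1^t(by)$ already present in the exponent (they sum to $0$ in characteristic~$2$), leaving only $Tr_1^t(h(y)+h(y+b))$.

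Putting these together, the negabent condition of Theorem~\ref{th:mmf} becomes
$$\sum_{y\in\mathbb F_{2^t}}(-1)^{Tr_1^t(h(y)+h(y+b))}=0\qquad\mbox{for all }b\in\mathbb F_{2^t}^*,$$
where I would observe that as $b$ runs over $\mathbb F_{2^t}^*$ so does $a=b^{2^i}$, so no required pair is omitted. But this is exactly the bent criterion (\ref{eq-main-bent}) applied to the Boolean function $g(y)=Tr_1^t(h(y))$ on $\mathbb F_{2^t}$. Therefore $f$ is negabent if and only if $Tr_1^t(h(y))$ is bent, which establishes both implications at once.

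I do not anticipate a genuine obstacle here, since the argument is essentially a clean substitution powered by the two Frobenius facts above. The one point requiring care is the bookkeeping in the second step: one must verify that restricting attention to nonempty $Y_{a,b}$ loses nothing, and that the surjectivity of $b\mapsto b^{2^i}$ on $\mathbb F_{2^t}^*$ makes the reduced condition range over \emph{exactly} all nonzero $b$, so that the equivalence with bentness is two-sided rather than merely necessary.
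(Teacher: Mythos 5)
Your proposal is correct and follows essentially the same route as the paper's proof: both exploit the linearity of $\pi(y)=y^{2^i}$ to reduce $Y_{a,b}$ to all of $\mathbb F_{2^t}$ exactly when $a=\pi(b)$, and both use $Tr_1^t(b^{2^i}y^{2^i})=Tr_1^t((by)^{2^i})=Tr_1^t(by)$ to cancel the $Tr_1^t(by)$ term, leaving precisely the bent criterion for $Tr_1^t(h(y))$. Your explicit remark that $b\mapsto b^{2^i}$ is a bijection of $\mathbb F_{2^t}^*$, so that the reduced condition ranges over all nonzero $b$, is a welcome piece of bookkeeping that the paper handles only implicitly.
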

\begin{proof}
Let $\pi(y) = y^{2^i}$. Then $\pi(y)$ is a homomorphic
permutation polynomial over $\mathbb F_{2^{t}}$.
From the linearity of $\pi$ we have 
$\pi(y)+\pi(y+b) = a$ 
if and only if $\pi(b) = a$.
Then \begin{equation*}
Y_{a,b} = \begin{cases}
\mathbb{F}_{2^t} & \text{when} ~\pi(b) = a\\
\text{empty} & \text{when} ~\pi(b) \ne a.
\end{cases}
\end{equation*}
Since $\pi$ is a permutation, for each $a$ there will be a $b$ such that
$\pi(b) = a$. For such $a$ and $b$ 
{\small
\begin{eqnarray*}
\sum_{y \in \mathbb F_{2^t}}(-1)^{Tr_1^t(a \pi(y) + h(y) + h(y+b) + by)}
& = & \sum_{y \in \mathbb F_{2^t}}(-1)^{Tr_1^t(\pi(b)\pi(y) + by
+ h(y) + h(y+b))}\\
& = & \sum_{y \in \mathbb F_{2^t}}(-1)^{Tr_1^t(\pi(by)) + Tr_1^t(by) 
+ Tr_1^t(h(y) + h(y+b))}.
\end{eqnarray*}
}
Note that $Tr_1^t(y) = Tr_1^t(\pi(y))$, for all $y \in \mathbb{F}_2^t$.
So
$$
\sum_{y \in \mathbb F_{2^t}}(-1)^{Tr_1^t(a \pi(y) + h(y) + h(y+b) + by)}
 = \sum_{y \in \mathbb F_{2^t}}(-1)^{Tr_1^t(h(y) + h(y+b))}.
$$
Using Theorem \ref{th:mmf}, the function  $f$ is negabent if and only if
$$\sum_{y \in \mathbb F_{2^t}}(-1)^{Tr_1^t(h(y)) + Tr_1^t(h(y+b))} = 0,$$
for all $b \in \mathbb F_{2^t}^*$, {\em i.e.}, $Tr_1^t(h(y))$ is a bent function
over $\mathbb F_{2^t}$.

Thus the result follows.

\end{proof}

Similar kind of result was proved in \cite{Sugatanega}, where the
function was defined over the vector space $\mathbb{F}_2^n$ and the permutation 
was  such that $wt(x + y) = wt(\pi(x) + \pi(y))$. 
However, the result of Theorem \ref{homo} is quite distinct as it is
in the domain of finite fields.
Moreover, Theorem \ref{th:mmf} is a general characterization of bent-negabent
Maiorana-McFarland functions and several constructions of Maiorana-McFarland
bent-negabent functions can be obtained from this. 
For instance, Theorem \ref{homo} allows us to construct bent-negabent 
Maiorana-McFarland function of degree $n/4$ over $\mathbb{F}_{2^n}$
by choosing a bent function of degree $n/4$ as $h$, where $n = 2t$.

\section{Negabent functions from bent functions}
\label{new}
We show that given a negabent function over a finite field, one can construct a bent function,
and vice versa.
First we define 
$Q : \mathbb F_{2^n} \rightarrow \mathbb F_2$ as
\begin{equation}
\label{def-Q}
Q(x) = \sum_{i = 1}^{{n \over 2} - 1}Tr_1^n(x^{2^i + 1}) + Tr_1^{n \over 2}(x^{2^{n \over 2} + 1})
\end{equation}

As mentioned earlier, for simplicity we write $Tr_1^n(x) = Tr(x)$.

We also mention a result
from \cite{decompose} and \cite{carlet-parseval} which will be useful in
proving our result. These results were proved for Boolean functions defined
over vector spaces, however, it is easy to see the equivalent results
when the Boolean function is defined by the trace representation.

\begin{lemma}{\cite[Corollary 1]{decompose}}
\label{pascale}
Suppose $H_{\beta} = \{x \in \F_{2^n} : Tr(\beta x) = 0\}$ is 
a hyperplane. If $f: \F_{2^n} \rightarrow \F_2$ is bent, then for
any $a \notin H_{\beta}$,
\begin{eqnarray}
\sum_{x \in \mathbb{F}_{2^n}}
(-1)^{f(x) + f(x+a) + Tr(\beta x)} = 0.
\end{eqnarray}
\end{lemma}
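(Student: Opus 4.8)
The plan is to prove this identity by an elementary change of variables that exploits only characteristic $2$; as I will note at the end, the bentness of $f$ is not actually needed for this particular sum, which is what makes the argument short. First I would put the hypothesis into usable form: since $H_\beta = \{x : Tr(\beta x) = 0\}$ is a hyperplane we have $\beta \neq 0$, and the condition $a \notin H_\beta$ says exactly that $Tr(\beta a) \neq 0$, i.e. $Tr(\beta a) = 1$. Write $S = \sum_{x \in \F_{2^n}} (-1)^{f(x) + f(x+a) + Tr(\beta x)}$ for the sum in question.

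The key step is to reindex $S$ by the substitution $x \mapsto x + a$, which is a bijection of $\F_{2^n}$. Under it the derivative term $f(x) + f(x+a)$ becomes $f(x+a) + f((x+a)+a) = f(x+a) + f(x)$, using $(x+a)+a = x$ in characteristic $2$, so that pair of values is merely permuted and the derivative term is unchanged; meanwhile the linear character transforms as $Tr(\beta x) \mapsto Tr(\beta(x+a)) = Tr(\beta x) + Tr(\beta a)$, picking up the constant $Tr(\beta a)$. Hence
\[
S = \sum_{x \in \F_{2^n}} (-1)^{f(x) + f(x+a) + Tr(\beta x) + Tr(\beta a)} = (-1)^{Tr(\beta a)}\, S.
\]
Since $Tr(\beta a) = 1$, this reads $S = -S$, so $2S = 0$ and therefore $S = 0$, which is the claim.

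There is no real obstacle here; the only thing to get right is the bookkeeping in the substitution, namely that $\{f(x), f(x+a)\}$ is permuted rather than altered while $Tr(\beta x)$ absorbs the constant $Tr(\beta a)$. I would also remark that the argument uses nothing about $f$ beyond its being $\F_2$-valued, so the statement in fact holds for arbitrary $f$; the bent hypothesis is inherited from the context of \cite{decompose}, where this appears as a corollary of a broader decomposition result, rather than being essential. One could instead route the proof through the Walsh spectrum, rewriting $S = \frac{1}{2^n}\sum_{\nu} W_f(\nu)\,W_f(\nu+\beta)\,(-1)^{Tr(a\nu)}$ and invoking $W_f(\nu) = \pm 2^{n/2}$; that version genuinely uses bentness but is heavier and is the place where one is most likely to get bogged down, so I would keep the one-line substitution as the main proof.
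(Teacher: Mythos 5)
Your proof is correct. The substitution $x\mapsto x+a$ fixes the derivative term $f(x)+f(x+a)$ and multiplies the character by $(-1)^{Tr(\beta a)}=-1$, giving $S=-S$ and hence $S=0$; equivalently, the terms indexed by $x$ and $x+a$ cancel in pairs. There is nothing to compare against inside the paper itself, since the lemma is imported verbatim from the cited reference with no proof given; but your argument is a legitimate, fully elementary verification, and your side remark is the interesting part: the hypothesis that $f$ is bent plays no role whatsoever when $a\notin H_\beta$, so the statement as quoted holds for every Boolean function $f:\F_{2^n}\rightarrow\F_2$. (The bent hypothesis is genuinely needed only for the complementary case $a\in H_\beta^*$, where your substitution yields the vacuous identity $S=S$ and one must instead invoke the Walsh spectrum, as in the cited source or in Lemma~\ref{bent-hyp}.) One consequence worth noting: in Subcase 1.2 of the proof of Theorem~\ref{bent-quad-th}, the appeal to Lemma~\ref{pascale} could be replaced by this two-line cancellation, so that part of the argument does not actually depend on $f$ being bent. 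Your proposed Walsh-transform alternative, $S=\frac{1}{2^n}\sum_{\nu}W_f(\nu)W_f(\nu+\beta)(-1)^{Tr(a\nu)}$, is also correct, but as you observe it still reduces to the same pairing (now over $\nu$ and $\nu+\beta$) and is not needed here.
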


\begin{lemma}{\cite[Theorem V.3]{carlet-parseval}}
\label{bent-hyp}
The Boolean function $f: \F_{2^n} \rightarrow \F_2$ is bent if and only 
if there exists a hyperplane $\mathcal{H}$ such that $f(x) + f(x+a)$ is balanced
for every nonzero $a \in \mathcal{H}$. 
\end{lemma}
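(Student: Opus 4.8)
The plan is to prove the two directions separately, the forward one being immediate and the converse resting on a Wiener--Khinchin type identity together with a short $2$-adic argument. For the forward direction, if $f$ is bent then by (\ref{eq-main-bent}) the derivative $f(x)+f(x+a)$ is balanced for \emph{every} nonzero $a$, so any hyperplane $\mathcal H$ whatsoever has the required property; there is nothing to prove beyond recalling the autocorrelation characterisation of bentness.

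For the converse I would work with the (unnormalised) Walsh transform $W_f(\lambda)=\sum_{x\in\F_{2^n}}(-1)^{f(x)+Tr(\lambda x)}$, recalling that $f$ is bent exactly when $W_f(\lambda)^2=2^n$ for all $\lambda$. The key tool is the identity $W_f(\lambda)^2=\sum_{a\in\F_{2^n}}\tau_a(-1)^{Tr(\lambda a)}$, obtained by expanding the square, substituting $y=x+a$, and recognising the inner sum as $\tau_a$. Writing the given hyperplane as $\mathcal H=H_\beta=\{x:Tr(\beta x)=0\}$ for a suitable nonzero $\beta$, I would then add the identities at $\lambda$ and at $\lambda+\beta$. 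Since $1+(-1)^{Tr(\beta a)}$ equals $2$ for $a\in H_\beta$ and $0$ otherwise, this collapses the sum to the hyperplane:
\begin{equation*}
W_f(\lambda)^2+W_f(\lambda+\beta)^2=2\sum_{a\in H_\beta}\tau_a(-1)^{Tr(\lambda a)}.
\end{equation*}
By hypothesis $\tau_a=0$ for every nonzero $a\in H_\beta$, while $\tau_0=2^n$, so the right-hand side is exactly $2^{n+1}$, independently of $\lambda$.

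It remains to deduce $W_f(\lambda)^2=2^n$ from $W_f(\lambda)^2+W_f(\lambda+\beta)^2=2^{n+1}$, and this is where the only real arithmetic lies. Here I would use that each $W_f(\lambda)$ is an integer and that $n$ is even (bent functions exist only for even $n$). A short $2$-adic valuation check shows that the only way two integer squares can sum to $2^{n+1}$ with $n$ even is $2^n+2^n$: if the two summands had different $2$-adic valuations the sum would have even valuation, contradicting that $n+1$ is odd, while if they had equal valuation $2s$ then after factoring out $2^{2s}$ one is left with a sum of two odd squares equal to $2^{\,n+1-2s}$, which is $\equiv 2\pmod 8$ and hence forces $n+1-2s=1$ and both odd parts to be $\pm1$. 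Either way $W_f(\lambda)^2=2^n$ for every $\lambda$, so $f$ is bent.

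The main obstacle I anticipate is precisely this last arithmetic step: the autocorrelation information alone only yields the \emph{pairwise} constraint $W_f(\lambda)^2+W_f(\lambda+\beta)^2=2^{n+1}$, which a priori permits the split $\{0,2^{n+1}\}$; ruling this out genuinely requires integrality of the Walsh values and the parity of $n$, and it is exactly the feature that makes the biconditional fail for odd $n$. Everything preceding it is a routine manipulation of character sums.
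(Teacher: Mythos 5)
Your proof is correct; note, however, that the paper offers no proof of this lemma at all --- it is imported as a black box from the cited reference (Theorem V.3 of \emph{carlet-parseval}), so there is no in-paper argument to compare against. Your route is a clean self-contained derivation: the forward direction is indeed immediate from the autocorrelation characterisation of bentness, and for the converse the Wiener--Khinchin identity $W_f(\lambda)^2=\sum_{a}\tau_a(-1)^{Tr(\lambda a)}$ together with the averaging over $\{\lambda,\lambda+\beta\}$ correctly collapses the sum to the hyperplane $H_\beta$ and yields $W_f(\lambda)^2+W_f(\lambda+\beta)^2=2^{n+1}$. Your identification of the real content --- that this pairwise constraint a priori admits the split $\{0,2^{n+1}\}$, and that ruling it out requires integrality of the Walsh values plus the parity of $n$ --- is exactly right, and the $2$-adic case analysis (distinct valuations give an even valuation for the sum, contradicting $v_2(2^{n+1})=n+1$ odd; equal valuations $2s$ reduce to a sum of two odd squares, which is $2 \bmod 8$, forcing $s=n/2$) is sound. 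One cosmetic caveat: the lemma as stated in the paper does not say $n$ is even, and your argument makes visible that the ``if'' direction genuinely fails for odd $n$ (e.g.\ near-bent functions realise the split $\{0,2^{n+1}\}$); it would be worth stating the evenness hypothesis explicitly, since the paper only uses the lemma for even $n$. What your approach buys over the paper's citation is precisely this transparency: a short elementary proof that also locates the exact role of the evenness assumption.
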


\vspace{1cm}
\begin{theorem}
\label{bent-quad-th}
Suppose $f : \mathbb F_{2^n} \rightarrow \mathbb F_2$, 
and $Q$ is as defined in (\ref{def-Q}).

\begin{enumerate}

\item
\label{b2n}
if $f$ is bent then $f+Q$ is negabent.

\item
\label{n2b}
If $f$ is negabent then $f+Q$ is bent,
\end{enumerate}
\end{theorem}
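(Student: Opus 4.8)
The plan is to reduce both claims to a single computation: the polarization (associated symmetric bilinear form) of the quadratic function $Q$. Because $Q$ is quadratic with $Q(0)=0$, for every $a$ we have $Q(x+a)+Q(x) = Q(a)+\Phi(x,a)$, where $\Phi(x,a)=Q(x+a)+Q(x)+Q(a)$. The crux of the argument is the identity
\[ \Phi(x,a) = Tr(xa) + Tr(x)\,Tr(a). \]
Once this is in hand, each part becomes essentially a one-line substitution into the negabent criterion (\ref{eq-main-nega}) and the bent criterion (\ref{eq-main-bent}), followed by a split on whether $Tr(a)=0$ or $Tr(a)=1$.

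To prove the identity I would differentiate $Q$ termwise. Expanding $(x+a)^{2^i+1}$ and repeatedly using $Tr(z)=Tr(z^{2^j})$, the Gold summand $Tr(x^{2^i+1})$ contributes the bilinear part $Tr(xa^{2^i})+Tr(xa^{2^{n-i}})$; summing over $i=1,\dots,n/2-1$ produces $Tr(xa^{2^i})$ for every exponent $i\in\{1,\dots,n-1\}\setminus\{n/2\}$. The subfield summand $Tr_1^{n/2}(x^{2^{n/2}+1})$ is the delicate one: using transitivity $Tr_1^n=Tr_1^{n/2}\circ Tr_{n/2}^n$ together with $a^{2^n}=a$, its bilinear part $Tr_1^{n/2}\!\big(x^{2^{n/2}}a+a^{2^{n/2}}x\big)$ rewrites as the full-field term $Tr(xa^{2^{n/2}})$, filling in exactly the missing exponent $i=n/2$. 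Hence $\Phi(x,a)=\sum_{i=1}^{n-1}Tr(xa^{2^i})=Tr\!\big(x(a+Tr(a))\big)=Tr(xa)+Tr(x)Tr(a)$, where $\sum_{i=1}^{n-1}a^{2^i}=a+Tr(a)$ in characteristic $2$. I expect this bookkeeping to be the main obstacle: one must treat the subfield middle term correctly and notice the extra $Tr(x)Tr(a)$ summand, which is precisely what forces the case split below.

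Granting the identity, part (\ref{b2n}) follows quickly. Let $f$ be bent and fix $a\neq 0$. Since $\Phi(x,a)+Tr(ax)=Tr(x)Tr(a)$, the nega-derivative simplifies to
\[ (f+Q)(x)+(f+Q)(x+a)+Tr(ax) = f(x)+f(x+a)+Q(a)+Tr(a)Tr(x), \]
so the relevant sum equals $(-1)^{Q(a)}\sum_{x}(-1)^{f(x)+f(x+a)+Tr(a)Tr(x)}$. If $Tr(a)=0$ this is $(-1)^{Q(a)}$ times the bent autocorrelation, which vanishes by (\ref{eq-main-bent}); if $Tr(a)=1$ then $a\notin H_1=\{x:Tr(x)=0\}$, and Lemma \ref{pascale} with $\beta=1$ forces the sum to vanish. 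Thus $f+Q$ is negabent.

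For part (\ref{n2b}) I would sidestep a second case analysis by invoking Lemma \ref{bent-hyp} with the hyperplane $\mathcal{H}=H_1=\ker Tr$. For nonzero $a\in H_1$ we have $Tr(a)=0$, hence $\Phi(x,a)=Tr(xa)$ and
\[ (f+Q)(x)+(f+Q)(x+a) = f(x)+f(x+a)+Q(a)+Tr(xa). \]
The associated autocorrelation sum equals $(-1)^{Q(a)}\sum_x(-1)^{f(x)+f(x+a)+Tr(xa)}$, which is $0$ because $f$ is negabent, by (\ref{eq-main-nega}). So the derivative $D_a(f+Q)$ is balanced for every nonzero $a$ in the hyperplane $H_1$, and Lemma \ref{bent-hyp} yields that $f+Q$ is bent.
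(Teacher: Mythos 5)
Your proposal is correct and follows essentially the same route as the paper: both compute the derivative of $Q$ to get $Q(x)+Q(x+a)=Tr(ax)+Tr(a)Tr(x)+\text{const}$, then handle part (\ref{b2n}) by splitting on $Tr(a)$ and invoking Lemma \ref{pascale} for $Tr(a)=1$, and part (\ref{n2b}) by restricting to the hyperplane $H_1$ and applying Lemma \ref{bent-hyp}. The only (harmless) difference is that you track the constant $Q(a)$ explicitly via the polarization identity, where the paper simply discards it.
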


\begin{proof}
Suppose $a \in F_{2^n}^*$, then
\begin{eqnarray}
\label{der-Q}
\nonumber
Q(x) + Q(x+a) 	& = & \sum_{i = 1}^{{n \over 2}-1}Tr(a^{2^i} x + a x^{2^i}) + Tr_1^{n \over 2}(a x^{2^{n \over 2}} + a^{2^{n \over 2}} x) + \text{constant} \\
\nonumber
\		& = & \sum_{i = 1}^{{n \over 2}-1}Tr((a^{2^i} + a^{2^{n-i}})x) + Tr_1^n(a^{2^{n \over 2}} x) + \text{constant} \\
\nonumber
\		& = & \sum_{i = 1}^{{n \over 2}-1}Tr((a^{2^i} + a^{2^{n-i}})x) 
			+ Tr(a^{2^{{n \over 2}}}x)+ Tr(ax)\\
\nonumber
		&   &~~~~~~~~~~	  + Tr(ax) +  \text{constant},\\
\nonumber			
\		& = & Tr(Tr(a)x) + Tr(ax) + \text{constant} \\
\		& = & Tr(a)Tr(x) + Tr(ax) + \text{constant}.
\end{eqnarray}

\noindent
Without any loss of generality we ignore the constant term in $Q(x)+Q(x+a)$.

\vspace{1cm}
\noindent
{
\bf Case 1:}
We prove that if $f$ is bent, then $f+Q$ is negabent, for which we have to show that
\begin{center}
$ \begin{array}{l}
\displaystyle
\sum_{x \in \mathbb{F}_{2^n}}
(-1)^{f(x) + f(x+a) +Q(x) + Q(x+a) + Tr(ax)} = 0
\end{array}$
\end{center}

We have
\begin{center}
$ \begin{array}{l}
\displaystyle
\sum_{x \in \mathbb{F}_{2^n}}
(-1)^{f(x) + f(x+a) +Q(x) + Q(x+a) + Tr(ax)} \\
\displaystyle
= \sum_{x \in \mathbb{F}_{2^n}} (-1)^{f(x) + f(x+a)+ Tr(a)Tr(x)}.
\end{array}$
\end{center}

\vspace{1cm}
\noindent
{\bf Subcase 1.1:}
If $Tr(a) = 0$, then
\begin{eqnarray}
\nonumber
\sum_{x \in \mathbb{F}_{2^n}}
(-1)^{f(x) + f(x+a) + Tr(a)Tr(x)} = 0, \mbox{ since $f$ is bent}.
\end{eqnarray}

\vspace{1cm}
\noindent
{\bf Subcase 1.2:}
If $Tr(a) = 1$, then $a$ does not belong to the hyperplane
$$H_1 = \{x \in \F_{2^n} : Tr(1.x) = 0\}.$$ Therefore, by Lemma \ref{pascale},
\begin{eqnarray}
\nonumber
\sum_{x \in \mathbb{F}_{2^n}}  
(-1)^{f(x) + f(x+a) + Tr(1.x)} = 0.
\end{eqnarray}
So for any $a \in F_{2^n}^*$,
\begin{eqnarray}
\nonumber
\sum_{x \in \mathbb{F}_{2^n}}
(-1)^{f(x) + f(x+a) +Q(x) + Q(x+a) + Tr(ax)} = 0.
\end{eqnarray}
This implies that is $f + Q$ is negabent.

\vspace{1cm}
\noindent
{\bf Case 2:}
Next we suppose that $f$ is negabent and prove that $g = f+Q$ is bent.
For any $a \in \F_{2^n}^*$ we have
\begin{center}
$\begin{array}{l}
\displaystyle \sum_{x \in \mathbb{F}_{2^n}}
(-1)^{g(x) + g(x+a)} \\
= \displaystyle \sum_{x \in \mathbb{F}_{2^n}}
(-1)^{f(x) + f(x+a) + Q(x) + Q(x+a)} \\
= \displaystyle \sum_{x \in \mathbb{F}_{2^n}} (-1)^{f(x) + f(x+a)+ Tr(a)Tr(x) + Tr(ax)}, \mbox{by } (\ref{der-Q})\\
= \displaystyle \sum_{x \in \mathbb{F}_{2^n}} (-1)^{f(x) + f(x+a) + Tr(ax)}, \mbox{if } a \in H_1, i.e. Tr(a) = 0\\
= 0, \mbox{since $f$ is negabent}.
\end{array}$
\end{center}

Therefore, we see that for any nonzero $a$ that belongs to the hyperplane $H_1$,
$g(x) + g(x+a)$ is balanced. Hence by Lemma \ref{bent-hyp}, we prove that $g$ is 
bent.

\end{proof}

This theorem has interesting consequences. 

\begin{corollary}
\label{btnbt-cor}
The Boolean function $f: \F_{2^n} \rightarrow \F_2$ is bent-negabent 
if and only if both $f$ and $f+Q$ are bent.
\end{corollary}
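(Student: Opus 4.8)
The plan is to derive the corollary directly from Theorem \ref{bent-quad-th} by using the symmetry that $Q$ is its own ``inverse'' under addition, since $Q+Q=0$ over $\F_2$. The statement to prove is that $f$ is bent-negabent if and only if both $f$ and $f+Q$ are bent. I would prove the two directions separately, treating the logical equivalence $(f \text{ bent}) \wedge (f \text{ negabent}) \iff (f \text{ bent}) \wedge (f+Q \text{ bent})$.

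First I would handle the forward direction. Assume $f$ is bent-negabent, so $f$ is in particular bent, which gives the first half of the right-hand side immediately. Since $f$ is negabent, part (\ref{n2b}) of Theorem \ref{bent-quad-th} applies and tells us that $f+Q$ is bent. Thus both $f$ and $f+Q$ are bent, as required. This direction is essentially a one-line application of the theorem.

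For the converse, assume both $f$ and $f+Q$ are bent. Again $f$ bent is part of what we want. To obtain that $f$ is negabent, the key observation is that $Q+Q=0$, so $f = (f+Q)+Q$. Setting $g = f+Q$, which is bent by hypothesis, part (\ref{b2n}) of Theorem \ref{bent-quad-th} says that $g+Q$ is negabent. But $g+Q = (f+Q)+Q = f$, so $f$ is negabent. Combined with $f$ being bent, this shows $f$ is bent-negabent.

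I do not expect any serious obstacle here, since the corollary is a formal consequence of Theorem \ref{bent-quad-th}; the only point requiring a moment's care is the involutive identity $Q+Q=0$ that lets me run part (\ref{b2n}) ``backwards'' to recover negabentness of $f$ from bentness of $f+Q$. I would make sure to state explicitly that addition of Boolean functions is over $\F_2$ so that $(f+Q)+Q=f$, and to note that the hypothesis ``$f$ bent'' is carried along unchanged in both directions so that it appears on both sides of the equivalence.
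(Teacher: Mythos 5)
Your proof is correct and matches the paper's (implicit) argument: the corollary is stated as an immediate consequence of Theorem \ref{bent-quad-th}, and your two-directional application of parts (\ref{b2n}) and (\ref{n2b}), together with the observation that $(f+Q)+Q=f$, is exactly the intended reasoning. No gaps.
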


\begin{corollary}
The Boolean function $f: \F_{2^n} \rightarrow \F_2$ is bent-negabent 
if and only if $f+Q$ is bent-negabent.
\end{corollary}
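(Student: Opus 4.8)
The plan is to exploit the fact that, working over $\F_2$, adding the fixed quadratic $Q$ is an \emph{involution}: since $2Q \equiv 0$, we have $(f+Q)+Q = f$ for every Boolean function $f$. Thus the corollary is really the assertion that the map $f \mapsto f+Q$ preserves the bent-negabent property, and it will suffice to prove one implication, the reverse following by applying it to $g := f+Q$.

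First I would establish the forward direction directly from Theorem \ref{bent-quad-th}. Assume $f$ is bent-negabent, so that $f$ is simultaneously bent and negabent. Applying part \ref{b2n} of Theorem \ref{bent-quad-th} to the bent function $f$ shows that $f+Q$ is negabent, while applying part \ref{n2b} to the negabent function $f$ shows that $f+Q$ is bent. Hence $f+Q$ is bent-negabent. For the reverse direction I would simply feed $g = f+Q$ into this same forward implication: if $f+Q$ is bent-negabent, then so is $g+Q = (f+Q)+Q = f$, which returns the conclusion that $f$ is bent-negabent.

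Alternatively, and more symmetrically, I would route the entire argument through Corollary \ref{btnbt-cor}. By that corollary, $f$ is bent-negabent if and only if both $f$ and $f+Q$ are bent. Applying the very same corollary to the function $f+Q$ shows that $f+Q$ is bent-negabent if and only if both $f+Q$ and $(f+Q)+Q = f$ are bent. The two characterizing conditions are literally identical — each reads ``$f$ and $f+Q$ are both bent'' — so the desired equivalence is immediate, with no further computation.

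There is essentially no hard step in this proof; the only point requiring care is the characteristic-$2$ cancellation $(f+Q)+Q = f$, which is precisely what makes $f \mapsto f+Q$ an involution and lets a single implication (or a single double-application of Corollary \ref{btnbt-cor}) carry the whole argument. Everything else is a direct invocation of the already-proved Theorem \ref{bent-quad-th} and its corollary.
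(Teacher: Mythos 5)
Your proof is correct and matches the paper's intent exactly: the paper states this corollary without proof as an immediate consequence of Theorem \ref{bent-quad-th} and Corollary \ref{btnbt-cor}, and your argument (the characteristic-$2$ involution $(f+Q)+Q=f$ together with either the two parts of the theorem or a double application of the preceding corollary) is precisely the omitted justification. Nothing is missing.
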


\begin{corollary}
The function $Q$ is bent but not negabent.
\end{corollary}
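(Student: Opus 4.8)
The plan is to obtain both halves of the statement as immediate consequences of part \ref{n2b} of Theorem \ref{bent-quad-th}, which asserts that $f + Q$ is bent whenever $f$ is negabent. The single observation that unlocks everything is that the zero function $f \equiv 0$ is negabent: its derivative $f(x) + f(x+a) = 0$ is constant for every $a$, so by the linear-structure discussion the term $f(x) + f(x+a) + Tr(ax) = Tr(ax)$ is a nonzero linear form, hence balanced for every $a \in \mathbb{F}_{2^n}^*$, so that the defining sum (\ref{eq-main-nega}) vanishes. Applying part \ref{n2b} of Theorem \ref{bent-quad-th} to this negabent $f \equiv 0$ immediately yields that $f + Q = Q$ is bent, settling the first claim.

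For the second claim I would argue by contradiction: suppose $Q$ were negabent. Then part \ref{n2b} of Theorem \ref{bent-quad-th}, applied this time to $f = Q$, would force $f + Q = Q + Q = 0$ to be bent. But the zero function is not bent, since taking any $a \in \mathbb{F}_{2^n}^*$ in the bentness criterion (\ref{eq-main-bent}) gives $\sum_{x \in \mathbb{F}_{2^n}} (-1)^{0} = 2^n \neq 0$. This contradiction shows that $Q$ cannot be negabent.

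I expect no genuine obstacle here once the two substitutions $f \equiv 0$ and $f = Q$ are spotted; the only points that require a line of justification are the two degenerate facts exploited above, namely that the constant function $0$ is negabent (so that Theorem \ref{bent-quad-th} can be invoked) yet fails to be bent (so that the contradiction bites). I note that one could alternatively route the same substitutions through Corollary \ref{btnbt-cor}, but appealing directly to part \ref{n2b} of Theorem \ref{bent-quad-th} is the most economical path.
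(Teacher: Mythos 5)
Your argument is correct, and the half concerning non-negabentness coincides exactly with the paper's: assume $Q$ is negabent, apply Theorem \ref{bent-quad-th} to conclude $Q+Q=0$ is bent, and contradict this with the trivial computation $\sum_{x}(-1)^{0}=2^n\neq 0$. Where you genuinely diverge is in establishing that $Q$ is bent. The paper does this directly, by inspecting the derivative computed in (\ref{der-Q}): $Q(x)+Q(x+a)=Tr(a)Tr(x)+Tr(ax)+\text{constant}$, which one checks is a balanced (affine, nonconstant) function for every $a\in\F_{2^n}^*$ -- using that $Tr(1)=0$ for even $n$ in the case $Tr(a)=1$. You instead feed the zero function into part \ref{n2b} of Theorem \ref{bent-quad-th}, after verifying via (\ref{eq-main-nega}) that the zero function is negabent (which also follows from the paper's remark that all affine functions are negabent). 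Your route is more uniform, deriving both halves of the corollary from the same theorem by the two substitutions $f\equiv 0$ and $f=Q$, and it avoids re-examining the derivative of $Q$; the paper's route is more self-contained for the bentness claim and makes visible exactly why $Q$ is bent (its derivatives are balanced), at the cost of a small case analysis. Since the proof of Case 2 of Theorem \ref{bent-quad-th} itself rests on the derivative formula (\ref{der-Q}), the two arguments ultimately draw on the same computation; yours simply packages it through the theorem rather than invoking it directly. No gap in either half.
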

\begin{proof}
It is easy to check that $Q$ is bent by looking at its derivative given
in (\ref{der-Q}).

Now on the contrary, assume that $Q$ is negabent. Then $g = Q+Q = 0$
is bent (by Theorem \ref{bent-quad-th}), which is a contradiction.

\end{proof}

We now use the result of Theorem \ref{bent-quad-th} to construct bent-negabent
functions.
Note that any two quadratic bent functions are affine equivalent.
It is clear that there is one-one correspondence between the
bent function defined over $\F_{2^{2t}}$ and $\F_{2^t} \times \F_{2^t}$.
With abuse of notation we use $Q(x,y)$ defined over $\F_{2^t} \times \F_{2^t}$
 as the corresponding bent function for $Q(x)$ which is defined in (\ref{def-Q}).
That means the bent function 
$G : (x,y) \in \mathbb F_{2^{t}} \times \mathbb F_{2^{t}} \mapsto \mathbb{F}_{2}$
given by
\begin{equation}
\label{G}
G(x,y) = Tr_1^t(x y), 
\end{equation}
is affine equivalent to the bent function $Q(x,y)$. 
This also means by Theorem \ref{bent-quad-th} that if $f(x,y)$ is a bent function
then $f(x,y) + Q(x,y)$ is negabent and vice versa.

Suppose $G(x,y)$ and $Q(x,y)$ are related by the relation
\begin{eqnarray}
\label{q-g}
Q(x,y) &=& G(\alpha_1 x + \alpha_2,\alpha_3 y + \alpha_4)  +
	 Tr_1^t(\beta x) + Tr_1^t(\gamma y) + c,
\end{eqnarray}
for some $\alpha_1, \alpha_2, \alpha_3, \alpha_, \beta, \gamma$ in $\F_2^t$,
$c \in \F_2$.
 
\begin{theorem}
\label{comp-th}
Let $f : (x,y) \in \mathbb F_{2^{t}} \times \mathbb F_{2^{t}} \mapsto
\mathbb{F}_{2}$ be a Maiorana-McFarland bent function
given by
$$
f(x,y) = Tr_1^t(x \pi(y)) + Tr_1^t(h(y)),
$$
where $\pi(y)$ is a complete mapping polynomial over $\F_{2^t}$,
$h(y)$ is any polynomial over $\F_{2^t}$,
and $G : (x,y) \in \mathbb F_{2^{t}} \times \mathbb F_{2^{t}} \mapsto
\mathbb{F}_{2}$ defined by $G(x,y) = Tr_1^t(x y)$.
Then
\begin{eqnarray}
\label{comp-th-cons}
\nonumber
F(x,y) &=& 
f(\alpha_1 x + \alpha_2, \alpha_3 y + \alpha_4) + G(\alpha_1 x + \alpha_2, \alpha_3 y + \alpha_4) \\
	& \ & + Tr_1^t(\beta x) + Tr_1^t(\gamma y) + c 
\end{eqnarray}
is a bent-negabent function.
\end{theorem}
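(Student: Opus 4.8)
The plan is to invoke Corollary~\ref{btnbt-cor}, which reduces the bent-negabentness of $F$ to showing that both $F$ and $F+Q$ are bent. The entire argument then rests on grouping the summands of (\ref{comp-th-cons}) in two different ways and on the standard fact that composing a bent function with an invertible affine map, and adding an affine function, both preserve bentness.

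First I would write $T(x,y)=(\alpha_1 x+\alpha_2,\alpha_3 y+\alpha_4)$ and record two rewritings of $F$. Comparing (\ref{comp-th-cons}) with the defining relation (\ref{q-g}) for $Q$, the terms $G(T(x,y))+Tr_1^t(\beta x)+Tr_1^t(\gamma y)+c$ are exactly $Q(x,y)$, so $F=f\circ T+Q$. Grouping instead $f\circ T$ with $G\circ T$ first gives $F=(f+G)\circ T+Tr_1^t(\beta x)+Tr_1^t(\gamma y)+c$, so that $F$ agrees with $(f+G)\circ T$ up to an affine summand.

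Next I would prove the two required bentness facts. Because $\pi$ is a complete mapping polynomial, $\pi(y)+y$ is a permutation, so $f+G=Tr_1^t\bigl(x(\pi(y)+y)+h(y)\bigr)$ is again a Maiorana-McFarland function and is bent by Lemma~\ref{mmf}; since $T$ is invertible, the first rewriting shows $F$ is an affine image of a bent function, hence bent. For the second fact, $F+Q=f\circ T$ because $Q+Q=0$; as $\pi$ is in particular a permutation, $f$ itself is bent by Lemma~\ref{mmf}, and its affine image $f\circ T$ is bent. Corollary~\ref{btnbt-cor} then gives that $F$ is bent-negabent. Equivalently, one could deduce negabentness of $F$ straight from $F=f\circ T+Q$ with $f\circ T$ bent, via Theorem~\ref{bent-quad-th}(\ref{b2n}).

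The one point demanding care is the invertibility of $T$, on which both appeals to affine equivalence rely: I must know $\alpha_1\ne 0$ and $\alpha_3\ne 0$. This is forced by the hypotheses, since $Q$ is a quadratic bent function and $Q=G\circ T+(\text{affine})$ by (\ref{q-g}); were $\alpha_1$ or $\alpha_3$ to vanish, $G\circ T$ would become affine in one block and $Q$ could not be bent. Beyond this, the only remaining bookkeeping is verifying that the linear corrections coincide in the two regroupings, which is immediate from (\ref{q-g}).
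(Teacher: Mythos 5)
Your proof is correct and follows essentially the same route as the paper: both arguments regroup $F$ as $(f+G)\circ T$ plus an affine term to get bentness of $F$ (using that $\pi(y)+y$ is a permutation since $\pi$ is a complete mapping), identify $F+Q$ with $f\circ T$ via (\ref{q-g}), and conclude by Corollary~\ref{btnbt-cor}. Your added observation that $\alpha_1,\alpha_3\ne 0$ is forced by the bentness of $Q$, so that $T$ is invertible, is a worthwhile detail that the paper leaves implicit.
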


\begin{proof}
We have
\begin{eqnarray}
\nonumber
f(x,y) + G(x,y) 	& = &  Tr_1^t(x \pi(y)) + Tr_1^t(h(y)) +  Tr_1^t(x y) \\
\nonumber
\			& = &  Tr_1^t(x (\pi(y) + y)) + Tr_1^t(h(y)).
\end{eqnarray}
Since $\pi(y)$ is a complete mapping polynomial over $\F_{2^t}$, $(\pi(y) + y)$ is a permutation polynomial,
so $f+G$ is a bent function. This also implies that
$$
F(x,y) = 
f(\alpha_1 x + \alpha_2, \alpha_3 y + \alpha_4)+G(\alpha_1 x + \alpha_2, \alpha_3 y + \alpha_4) + Tr_1^t(\beta x) + 
Tr_1^t(\gamma y) + c 
$$
is a bent function.
We have
\begin{center}
 
$\begin{array}{l}
F(x,y)\\
= f(\alpha_1 x + \alpha_2, \alpha_3 y + \alpha_4)+G(\alpha_1 x + \alpha_2, \alpha_3 y + \alpha_4) + Tr_1^t(\beta x) 
+ Tr_1^t(\gamma y) + c \\
= f(\alpha_1 x + \alpha_2, \alpha_3 y + \alpha_4) + Q(x,y), \text{by} \ref{q-g}.
\end{array}$

\end{center}

Note that $F(x,y) + Q(x,y) = f(\alpha_1 x + \alpha_2, \alpha_3 y + \alpha_4)$ is also bent.
So both $F(x,y)$ and $F(x,y) + Q(x,y)$ are bent.
Therefore, by Corollary \ref{btnbt-cor}, $F(x,y)$ is bent-negabent.

\end{proof}

At this point, we would like to refer to \cite[Theorem 22]{Sugatanega}, which also states
a result that is similar to Theorem \ref{comp-th}. In that result, the Boolean
function is defined over the vector space $\F_2^n$. Note that
complete mapping polynomials are defined over finite fields, however, the proof of \cite[Theorem 22]{Sugatanega}
works in the vector space domain.
They claim that $\pi(x_1, \ldots, x_t)$ is a permutation of $\F_2^t$ that corresponds to
the permutation $\pi(x)$ over the field $\F_{2^t}$, as well as
$\pi(x_1, \ldots, x_t) \oplus (x_1, \ldots, x_t)$ is the permutation
of $\F_2^t$ that corresponds to the permutation $\pi(x)+x$ over the field $\F_{2^t}$.
But it is not clear how this correspondence is realized.
On the other hand, Theorem \ref{comp-th} can directly apply the complete mapping
polynomials as the underlying Boolean function is defined over a finite filed.

Now we construct infinite classes of $n$-variable bent-negabent function with the maximum degree $n \over 2$.
Our construction is similar to that of Theorem $5$ of \cite{Pottnega}. 
Their proof works when there is a permutation polynomial $p(x)$ over the vector space $\F_2^n$ such that $p(x) + x$
is also a permutation polynomial over $\F_2^n$. However,
this kind of permutation over the vector space $\F_2^n$ is not characterized, on the other hand,
these kind of permutation polynomials (complete mapping polynomials) are well characterized over finite field.

\begin{corollary}
Suppose $n = 2t$. Then
the $n$-variable function $F(x,y)$ defined in Theorem \ref{comp-th}, where the polynomial $h(y)$
has algebraic degree $t = {n \over 2}$ is a bent-negabent function
of degree $n \over 2$.
\end{corollary}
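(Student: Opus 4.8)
The plan is to dispatch the two claims separately: that $F$ is bent-negabent, and that $\deg F = n/2$. The first is immediate, since $F$ is exactly the function built in Theorem \ref{comp-th}, which is already shown there to be bent-negabent. Consequently, by the known bound that every $n$-variable bent-negabent function has algebraic degree at most $n/2$, we obtain $\deg F \le t$ for free, and it remains only to prove the matching lower bound $\deg F \ge t$.

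For the lower bound I would isolate the unique summand of $F$ that can carry algebraic degree $t$ while depending on $y$ alone. Writing $u = \alpha_3 y + \alpha_4$, the composed Maiorana--McFarland part expands as
\[
f(\alpha_1 x + \alpha_2, u) = Tr_1^t\bigl((\alpha_1 x + \alpha_2)\pi(u)\bigr) + Tr_1^t(h(u)).
\]
The term $Tr_1^t(h(u))$ depends only on $y$, and because $y \mapsto \alpha_3 y + \alpha_4$ is an invertible affine change of coordinates on $\F_2^t$ (here $\alpha_3 \ne 0$), composing with it preserves algebraic degree; hence $Tr_1^t(h(u))$ has algebraic degree exactly $\deg h = t$. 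So its top part is a nonzero sum of degree-$t$ monomials involving only the coordinates of $y$.

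The heart of the argument is then to check that none of the remaining summands of $F$ can cancel these pure-$y$ degree-$t$ monomials. The term $Tr_1^t(\alpha_1 x\,\pi(u))$ is $\F_2$-linear in the coordinates of $x$, so every monomial in its algebraic normal form carries exactly one $x$-coordinate and can never be a pure-$y$ monomial. The leftover $Tr_1^t(\alpha_2 \pi(u))$ depends only on $y$, but since $\pi$ is a permutation polynomial its algebraic degree is at most $t-1$, so it contributes no degree-$t$ monomial at all. Finally, $G(\alpha_1 x + \alpha_2, u) = Tr_1^t\bigl((\alpha_1 x + \alpha_2)u\bigr)$ together with the affine corrections $Tr_1^t(\beta x) + Tr_1^t(\gamma y) + c$ has algebraic degree at most $2$, and its only contributions that are pure in $y$ are linear, hence of degree $1 < t$ for $t \ge 2$. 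Therefore the pure-$y$ degree-$t$ monomials coming from $h$ survive in $F$, giving $\deg F \ge t$; combined with the upper bound this yields $\deg F = t = n/2$.

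The main obstacle is precisely this non-cancellation step, and it rests on two structural facts I would want to state cleanly: that $Tr_1^t(x\pi(y))$ is linear in $x$ (so its whole algebraic normal form lives away from the pure-$y$ monomials), and that a permutation polynomial of $\F_{2^t}$ has algebraic degree at most $t-1$ (so the inhomogeneous shift by $\alpha_2$ cannot manufacture a competing pure-$y$ term of degree $t$). Everything else is routine bookkeeping on the degrees of the low-degree pieces $G$ and the linear corrections.
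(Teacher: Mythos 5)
Your proof is correct and follows essentially the same route as the paper: the paper's own proof is a two-line assertion that $\deg h = t$ forces $\deg(f+G) = t$ and that the invertible affine substitution (together with the affine corrections) preserves this degree. You merely supply the non-cancellation details the paper leaves implicit --- that $Tr_1^t(x\,\pi(y))$ is linear in the $x$-coordinates, that a permutation polynomial of $\F_{2^t}$ has algebraic degree at most $t-1$ so the shift by $\alpha_2$ cannot produce a competing pure-$y$ term of degree $t$, and that the quadratic and affine pieces are too low in degree to interfere --- which makes this a more careful rendering of the same argument rather than a different one.
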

\begin{proof}
The algebraic degree of $h(y)$ is $t = {n \over 2}$, which implies that
the degree of $f(x,y) + G(x,y)$ is also $n \over 2$.
That also implies that the degree 
$
F(x,y) = 
f(\alpha_1 x + \alpha_2, \alpha_3 y + \alpha_4)+G(\alpha_1 x + \alpha_2, \alpha_3 y + \alpha_4) + Tr_1^t(\beta x) +
Tr_1^t(\gamma y) + c
$ is $n \over 2$.

\end{proof}

Two infinite classes of complete mapping polynomial are given in \cite{yann}.

\begin{theorem}{\cite[Theorem 4.3]{yann}}
\label{yann-th}
 Let $p$ be a prime and $m$ and $\ell$ are two positive integers. Let $k$
 be the multiplicative order of $p$ in $\mathbb{Z}_m$.
 Assume $a \in \F_{p^{k \ell}}$ is such
 that $(-a)^m \ne 1$. Then the polynomials
 $$\pi_1(x) = x(x^{\frac{p^{k \ell m}-1}{m} } + a),$$
 and
 $$\pi_2(x) = ax^{\frac{p^{k \ell m}-1}{m} + 1},$$
 are complete mapping polynomials over $\F_{p^{k\ell m}}$.
\end{theorem}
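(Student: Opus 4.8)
The plan is to reduce the complete-mapping property of each of $\pi_1$ and $\pi_2$ to a triviality about the group $\mu_m$ of $m$-th roots of unity, exploiting the arithmetic forced by the definition of $k$. Write $q = p^{k\ell}$, so the ground field is $\F_{q^m}$ and the exponent is $N = \frac{p^{k\ell m}-1}{m} = \frac{q^m-1}{m}$. Since $k$ is the multiplicative order of $p$ modulo $m$, we have $p^k \equiv 1 \pmod m$ and hence $q = (p^k)^\ell \equiv 1 \pmod m$; in particular $m \mid q-1$, so $\mu_m \subseteq \F_q^{*}$. The key arithmetic fact I would record first is that $q-1$ divides $N$: indeed $N = (q-1)\cdot \frac{1 + q + \cdots + q^{m-1}}{m}$, and the fraction is an integer because each $q^i \equiv 1 \pmod m$ forces the numerator $\equiv m \equiv 0 \pmod m$. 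Consequently $w^{N} = 1$ for every $w \in \F_q^{*}$.

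Second, I would write each polynomial that must be shown bijective in the shape $x \mapsto x\,h(x^{N})$ with $h$ linear over $\F_q$. For $\pi_2$ the two relevant polynomials are $\pi_2(x) = x\cdot(ax^{N})$ and $\pi_2(x)+x = x\cdot(ax^{N}+1)$, i.e.\ $h(y)=ay$ and $h(y)=ay+1$; for $\pi_1$ they are $\pi_1(x)=x\cdot(x^{N}+a)$ and $\pi_1(x)+x=x\cdot(x^{N}+a+1)$, i.e.\ $h(y)=y+a$ and $h(y)=y+a+1$. I would then invoke the well-known criterion for permutation polynomials of the form $x^r h(x^{s})$ with $s=(q^m-1)/m$ and $\gcd(r,s)=1$: here $r=1$, so the gcd condition is automatic, and such a polynomial permutes $\F_{q^m}$ if and only if the induced map $\zeta \mapsto \zeta\,h(\zeta)^{N}$ permutes $\mu_m$.

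The crucial simplification is then immediate. For $\zeta \in \mu_m \subseteq \F_q$ and $a \in \F_q$ the value $h(\zeta)$ again lies in $\F_q$, so by the first step $h(\zeta)^{N} = 1$ whenever $h(\zeta)\neq 0$. Hence each induced map collapses to the identity $\zeta \mapsto \zeta$ on $\mu_m$, which is trivially a bijection, \emph{provided} the linear $h$ has no zero on $\mu_m$. Thus the entire theorem reduces to nonvanishing statements on $\mu_m$: $h(y)=ay$ is nonzero on $\mu_m$ exactly when $a\neq 0$; $h(y)=ay+1$ vanishes at $y=-a^{-1}$, so it is nonzero on $\mu_m$ exactly when $-a^{-1}\notin\mu_m$, equivalently $(-a)^m\neq 1$; and $h(y)=y+a$ (resp.\ $y+a+1$) is nonzero on $\mu_m$ exactly when $-a\notin\mu_m$ (resp.\ $-(a+1)\notin\mu_m$). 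For $\pi_2$ this is handled cleanly by the hypotheses $a\neq 0$ and $(-a)^m\neq 1$, giving both that $\pi_2$ and $\pi_2+x$ permute.

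I expect the main obstacle to be exactly this root condition on $\mu_m$: it is the analogue here of the description of the zeros of $Z_t$ in Lemma~\ref{root-zt} used for the quadratic monomials, and it is where the hypothesis $(-a)^m\neq 1$ must do its work. The delicate point is the passage from $\pi(x)$ to $\pi(x)+x$, which shifts the constant term of $h$: the hypothesis $(-a)^m\neq 1$ directly excludes the root $-a^{-1}$ of $h(y)=ay+1$ (and the root $-a$ of $h(y)=y+a$), but for the family $\pi_1$ one must separately guarantee $-(a+1)\notin\mu_m$, since a shift does not inherit nonvanishing from the unshifted polynomial. I would therefore treat the verification that every relevant linear $h$ is simultaneously free of zeros on $\mu_m$ as the heart of the argument, checking each of the four cases explicitly against the stated hypothesis.
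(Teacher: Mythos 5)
The paper does not actually prove this statement --- it is quoted as \cite[Theorem 4.3]{yann} and used as a black box --- so there is no internal proof to measure yours against; I can only judge your argument on its own terms. Your reduction is the right one, and it is essentially the argument of the cited source: since $k$ is the order of $p$ modulo $m$, $q=p^{k\ell}\equiv 1\pmod m$, hence $\mu_m\subseteq\F_q^*$ and $q-1\mid N$ where $N=(q^m-1)/m$; writing each of the four polynomials as $x\,h(x^N)$ with $h$ of degree one over $\F_q$ and applying the standard criterion for $x^rh(x^s)$ (with $r=1$, so the gcd condition is vacuous), the induced map $\zeta\mapsto\zeta h(\zeta)^N$ on $\mu_m$ collapses to the identity off the zero set of $h$, because $h(\zeta)\in\F_q$ and $q-1\mid N$. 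So each polynomial permutes $\F_{q^m}$ exactly when its $h$ has no zero on $\mu_m$, and this cleanly settles $\pi_2$ and $\pi_2+x$ (zeros $0$ and $-a^{-1}$, excluded by $a\neq 0$ and $(-a)^m\neq 1$) as well as $\pi_1$ itself (zero $-a$, excluded by $(-a)^m\neq 1$).

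The worry you flag at the end is not a loose end you failed to tighten; it is a genuine defect of the statement as transcribed, and your proof cannot be completed from the stated hypotheses. For $\pi_1(x)+x=x(x^N+a+1)$ the relevant zero is $-(1+a)$, and the condition $(-(1+a))^m\neq 1$ is necessary and independent of $(-a)^m\neq 1$. Concretely: take $p=2$, $m=5$, $\ell=1$, so $k=4$, $q=16$, field $\F_{2^{20}}$. Writing $\F_{16}^*=\langle g\rangle$ with $g^4=g+1$, one has $\mu_5=\{1,g^3,g^6,g^9,g^{12}\}$ and $\zeta+1\in\{g^{14},g^{13},g^{7},g^{11}\}$ for $\zeta\in\mu_5\setminus\{1\}$, none of which lies in $\mu_5$. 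So $a=\zeta+1$ satisfies $(-a)^5=a^5\neq 1$ (the theorem's hypothesis), $\pi_1$ is a permutation, yet $1+a=\zeta\in\mu_5$, so $\pi_1+x$ vanishes on all $N$ elements with $x^N=\zeta$ and is not a permutation: $\pi_1$ is not a complete mapping. (Similarly $a=0$ satisfies $(-a)^m\neq 1$ but annihilates $\pi_2$.) So your argument fully proves the $\pi_2$ claim under the evidently intended hypothesis $a\neq 0$, and proves the $\pi_1$ claim only under the additional hypothesis $(-(1+a))^m\neq 1$, which the original source presumably carries and the quotation has dropped; any downstream use of $\pi_1$ in the paper's final construction needs that condition as well.
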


\begin{theorem}
Suppose $n = 2 k \ell m$ and $\pi_1(y), \pi_2(y)$ are the complete mapping polynomials
as given in Theorem \ref{yann-th}.
Then the $n$-variable function $F(x,y)$ as given in (\ref{comp-th-cons}) is a bent-negabent
with degree $n \over 2$,
for $f(x,y) = Tr_1^{{n \over 2}}(x \pi_1(y)) + Tr_1^{{n \over 2}}(y^{2^{n \over 2}-1})$
and $f(x,y) = Tr_1^{{n \over 2}}(x \pi_2(y)) + Tr_1^{{n \over 2}}(y^{2^{n \over 2}-1})$.
\end{theorem}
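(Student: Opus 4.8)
The plan is to read this theorem as a concrete instantiation of Theorem~\ref{comp-th} together with its degree corollary, so that the only real work is checking that the data $(\pi_i, h)$ meet every hypothesis those results require. First I would put $t = n/2 = k\ell m$ and apply Theorem~\ref{yann-th} with $p = 2$ and the given $m, \ell$ (recall $k$ is the multiplicative order of $2$ in $\mathbb{Z}_m$), so that $\F_{p^{k\ell m}} = \F_{2^t}$. The hypothesis $(-a)^m \ne 1$ collapses to $a^m \ne 1$ in characteristic $2$, and for such an $a$ the polynomials $\pi_1(y)$ and $\pi_2(y)$ are complete mapping polynomials over $\F_{2^t}$; in particular each $\pi_i$ is a permutation of $\F_{2^t}$.

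Next I would verify that $f(x,y) = Tr_1^t(x\pi_i(y)) + Tr_1^t(y^{2^t-1})$ is exactly of the shape demanded by Theorem~\ref{comp-th}: it is a Maiorana--McFarland function with $\pi = \pi_i$ a complete mapping polynomial and $h(y) = y^{2^t-1}$, and by Lemma~\ref{mmf} it is bent because $\pi_i$ is a bijection. Applying Theorem~\ref{comp-th} verbatim then produces the bent-negabent function $F(x,y)$ of (\ref{comp-th-cons}); no fresh computation is needed at this stage, since that theorem already absorbs the affine substitution, the addition of $G$, and the linear and constant terms fixed by the relation (\ref{q-g}) tying $Q$ to $G$.

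Finally I would pin the degree down through the corollary following Theorem~\ref{comp-th}. The term $h(y) = y^{2^t-1}$ has exponent $2^t-1$, whose binary expansion is the all-ones string of length $t$, so under the algebraic-degree convention recorded in the preliminaries $\deg h = t = n/2$, the maximum possible. The corollary then forces $\deg F = n/2$, giving a bent-negabent function of optimal degree for each of $\pi_1$ and $\pi_2$, which is the assertion.

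The step I expect to be most delicate is the degree bookkeeping for $h(y) = y^{2^t-1}$. As a function on $\F_{2^t}$ this monomial is the indicator of $y \ne 0$, and $Tr_1^t(y^{2^t-1})$ reduces to $(t \bmod 2)\,(1 + \delta_0(y))$, where $\delta_0$ is the indicator of $0$; so one must confirm (or secure by an extra condition such as $t$ odd) that the top-degree contribution genuinely survives rather than cancelling, before the corollary can be invoked to certify degree $n/2$. Everything else amounts to quoting Theorem~\ref{yann-th}, Lemma~\ref{mmf}, Theorem~\ref{comp-th} and its corollary, and checking that the characteristic-$2$ form of the complete-mapping hypothesis indeed holds for the chosen $a$.
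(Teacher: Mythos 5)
Your proposal follows exactly the paper's route: the paper's entire proof is the one sentence that the result ``follows easily as the algebraic degree of $Tr_1^{n/2}(y^{2^{n/2}-1})$ is $n/2$,'' i.e.\ it instantiates Theorem \ref{comp-th} and its degree corollary with $\pi = \pi_i$ and $h(y) = y^{2^t-1}$, which is precisely your plan. The first two paragraphs of your argument are therefore fine and match the paper.

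The ``delicate step'' you flag at the end, however, is not merely delicate --- it is a genuine gap, and one the paper's own proof does not address. As you compute, $y^{2^t-1}$ takes values in $\F_2$, so $Tr_1^t\bigl(y^{2^t-1}\bigr) = (t \bmod 2)\, y^{2^t-1}$. When $t = k\ell m$ is even this is identically zero, $f$ degenerates to $Tr_1^t(x\pi_i(y))$, and $F$ fails to have degree $n/2$; bent-negabentness survives (Theorem \ref{comp-th} needs nothing from $h$), but the degree claim does not. This even case is not exotic: for $m=3$ the multiplicative order of $2$ is $k=2$, so $t = 2\cdot 3\cdot \ell$ is always even. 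The theorem as stated thus needs the extra hypothesis that $k\ell m$ be odd --- in which case $Tr_1^t(y^{2^t-1}) = 1 + \prod_{i=1}^{t}(1+y_i)$ genuinely has degree $t$, and no cancellation with $Tr_1^t(x\pi_i(y))$ can occur since every ANF monomial of the latter involves some $x_j$ --- or else a different $h$ whose trace actually attains degree $t$. Your suggested repair (requiring $t$ odd) is the right one; the paper's one-line proof silently assumes that the degree convention stated in its preliminaries (degree of $Tr(F)$ equals the maximal binary weight of an exponent of $F$) holds with equality, and $y^{2^t-1}$ is exactly a case where it does not.
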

\begin{proof}
 This follows easily as the algebraic degree of $Tr_1^{{n \over 2}}(y^{2^{n \over 2}-1})$ is $n \over 2$.
 
\end{proof}

\section{Conclusion}
We have presented some characterizations of negabent functions over the finite field.
The analysis done here is useful in order to obtain further results
on negabent functions over finite fields.
In this paper, we have characterized quadratic negabent monomials. 
The characterization
of negabent monomials of higher degree will be interesting.
We also have characterized negabent functions which are Maiorana-McFarland bent.
Moreover. we have presented a construction of bent-negabent functions with optimal
degree. This is the second known construction of such functions. However,
it is interesting to see further classes of such functions. 

\section{Acknowledgments}
The author would like to thank Pascale Charpin who helped in proving Case 2 of Theorem \ref{bent-quad-th}.
He is also thankful to Alexander Kholosha for helpful discussions.

\bibliographystyle{alpha}
\bibliography{negabent.bib}

\end{document}